\documentclass[reprint, amssymb,superscriptaddress, aps,nofootinbib]{revtex4-1}

\usepackage{tikz}
\usepackage{hyperref}

\usepackage{graphicx}
\usepackage{dcolumn}
\usepackage{amssymb,amsmath}
\usepackage{epstopdf}
\usepackage[utf8]{inputenc}
\usepackage{xcolor}
\usepackage[english]{babel}
\usepackage{amsthm}
\usepackage{CJK}
\usepackage[utf8]{inputenc}
\usepackage[english]{babel}
\usepackage{algcompatible}
\usepackage{newfloat}
\usepackage{tabularx}
\usepackage[export]{adjustbox}
\usepackage{mathtools}
\usepackage{mdframed}
\DeclareFloatingEnvironment[
    fileext=loa,
    listname=List of Algorithms,
    name=ALGORITHM,
    placement=tbhp,
]{algorithm}

\DeclareGraphicsRule{.tif}{png}{.png}{`convert #1 `dirname #1`/`basename #1 .tif`.png}
\newcommand{\p}{\mathcal{P}}

\newcommand{\R}{\mathbb{R}}

\newcommand{\pswap}{q}

\newtheorem{lemma}{Lemma}
\newtheorem{theorem}{Theorem}
\newcommand{\lmax}{l_{\mathrm{max}}}
\newcommand{\dmod}{D_{\mathrm{mod}}}
\newcommand{\g}{\tilde{g}}
\newcommand{\ra}{\tilde{r}}

\newtheorem{proposition}{Proposition}


\begin{document}
\author{Kaushik Chakraborty}
\email{k.chakraborty@tudelft.nl}
\affiliation{QuTech, 
Delft University of Technology,\\
Lorentzweg 1, 2628 CJ Delft, The Netherlands.}
\affiliation{Kavli Institute of Nanoscience, Delft University of Technology, \\ Lorentzweg 1, 2628 CJ Delft, The Netherlands}
\author{David Elkouss}
\affiliation{QuTech, 
Delft University of Technology,\\
Lorentzweg 1, 2628 CJ Delft, The Netherlands.}
\email{d.elkousscoronas@tudelft.nl}
\author{Bruno Rijsman}
\affiliation{QuTech, 
Delft University of Technology,\\
Lorentzweg 1, 2628 CJ Delft, The Netherlands.}
\email{brunorijsman@gmail.com}
\author{Stephanie Wehner}
\email{S.D.C.Wehner@tudelft.nl}
\affiliation{QuTech, 
Delft University of Technology,\\
Lorentzweg 1, 2628 CJ Delft, The Netherlands.}
\affiliation{Kavli Institute of Nanoscience, Delft University of Technology, \\ Lorentzweg 1, 2628 CJ Delft, The Netherlands}

\title{Entanglement Distribution in a Quantum Network, a Multi-Commodity Flow-Based Approach}

\begin{abstract}

We consider the problem of optimising the achievable EPR-pair distribution rate between multiple source-destination pairs in a quantum internet, where the repeaters may perform a probabilistic bell-state measurement and we may impose a minimum end-to-end fidelity as a requirement. We construct an efficient linear programming formulation that computes the maximum total achievable entanglement distribution rate, satisfying the end-to-end fidelity constraint in polynomial time (in the number of nodes in the network). We also propose an efficient algorithm that takes the output of the linear programming solver as an input and runs in polynomial time (in the number of nodes) to produce the set of paths to be used to achieve the entanglement distribution rate. Moreover, we point out a practical entanglement generation protocol which can achieve those rates.



\end{abstract}
\maketitle

\section{Introduction}

The quantum internet will provide a facility for communicating qubits between quantum information processing devices \cite{Van14, LSWK04, Kim08,WEH18}. It will enable us to implement interesting applications such as quantum key distribution \cite{bb14,E91}, clock synchronisation \cite{kkbj14}, secure multi-party computation \cite{CGS02}, and others \cite{WEH18}. To enable a full quantum internet the network needs to be able to produce entanglement between any two end nodes connected to the network \cite{MSLM13,Cal17,PKTT19,CRDW19}.

In this paper, we consider the problem of optimising the achievable rates for distributing EPR-pairs among multiple source-destination pairs in a network of quantum repeaters while keeping a lower bound on the end-to-end fidelity as a requirement. We propose a polynomial time algorithm for solving this problem and we show that, for a particular entanglement distribution protocol, our solution is tight and achieves the optimal rate. Our algorithm is inspired by \emph{multi-commodity flow} optimisation which is a very well-studied subject and has been used in many optimisation problems, including classical internet routing \cite{hu63}. In the context of a classical internet, a \emph{flow} is the total number of data packets, transmitted between a source and a destination per unit time (\emph{rate}). In this context, a \emph{commodity} is a \emph{demand}, which consists of a source, destination and potentially other requirements like the desired end-to-end packet transmission rate, quality of service, etc. In a classical network, a source and a destination can be connected via multiple communication channels as well as a sequence of repeaters and each of the communication channels has a certain \emph{capacity} which upper bounds the amount of flow it can transmit. In this context, a flow must satisfy another restriction, called \emph{flow conservation}, which says that the amount of flow entering a node (\emph{inflow}), except the source and destination node, equals the amount of flow leaving the node \footnote{Assuming that the intermediary repeater nodes do not lose packets while processing them.} (\emph{outflow}). With these constraints, one of the goals of a multi-commodity flow optimisation problem is to maximise the total amount of flows (end-to-end packet transmission rates) in a network given a set of commodities (demands). There exist \emph{linear programming} formulations for solving this problem and if we allow the flows to be a fraction then this \emph{linear programming} (LP) can be solved in polynomial  time (in the number of nodes) \cite{karm84}.




In a quantum internet, we abstract the entire network as a graph $G= (V,E, C)$, where $V$ represents the set of repeaters as well as the set of end nodes, and the set of edges, $E$, abstracts the physical communication links. Corresponding to each edge we define edge capacities $C: E\rightarrow \R^+$, which denotes the maximum elementary EPR-pair generation rate. We assume that the fidelity of all the EPR-pairs, generated between any two nodes $u,v \in V$ such that $(u,v) \in E$, is the same (say $F$). We refer to such EPR-pair as an \emph{elementary pair} and the physical communication link via which we create such an elementary pair is called an \emph{elementary link}. Flow in such a network is the EPR-pair generation rate between a source-destination pair. Depending on the applications, the end nodes may need to generate EPR-pairs with a certain fidelity. Keeping the analogy with the classical internet, here we refer to such requirement as a demand (commodity) and it consists of four items, a source $s \in V$, a destination $e \in V$, end-to-end desired entanglement distribution rate $r$ and an end-to-end fidelity requirement $F_{\text{end}}$. We denote the set of all such demands (commodities) as $D$. In this paper, we are interested in computing the maximum entanglement distribution rate (flow). Given a quantum network $G$ and a set of demands $D$, we investigate how to produce a set of paths $\p_i$ and an end-to-end entanglement generation rate $r_i$ (flow), corresponding to each demand $(s_i,e_i,F_i)$, such that the total entanglement generation rate $\sum_{i=1}^{|D|} r_i$ is maximised. In the rest of this paper, we refer to this maximisation problem as \emph{rate maximisation problem}. What is  more, here we also investigate what type of practical entanglement distribution protocol achieves such rate.

In the case of the quantum internet, we can use an LP for maximising the total flow $\sum_{i=1}^{|D|} r_i$. However, the working principle of quantum repeaters is different, unlike classical networks, the repeaters extend the length of the shared EPR-pairs by performing \emph{entanglement swapping} operations \footnote{Entanglement swapping is an important tool for establishing entanglement over long-distances. If two quantum repeaters, $A$ and $B$ are both connected to an intermediary quantum repeater $r$, but not directly connected themselves by a physical quantum communication channel such as fiber,  then $A$ and $B$ can nevertheless create entanglement between themselves with the help of $r$. First, $A$ and $B$ each individually create entanglement with $r$. This requires one qubit of quantum storage at $A$ and $B$ to hold their end of the entanglement, and two qubits of quantum storage at $r$. Repeater $r$ then performs an entanglement swap, destroying its own entanglement  with $A$ and $B$, but instead creating entanglement between $A$ and $B$. This process can be understood as repeater $r$ teleporting its qubit entangled with $A$ onto repeater $B$ using the entanglement that it shares with $B$.}  \cite{MATN15, teleport93, ZZHE93,GWZ08}. However, entanglement swapping operations might be probabilistic depending on the repeater technology used in the quantum internet. This implies that the usual flow-conservation property which we use in classical networks does not hold in the quantum networks, i.e., the sum of the inflow is not always equal to the sum of the outflow. Hence, the standard multi-commodity flow-based approach cannot be applied directly. 
For example, if the repeaters are built using \emph{atomic ensemble} and \emph{linear optics} then they use a probabilistic Bell-state measurement (BSM) for the entanglement swap operation \cite{SSDG11, GMLC13, SSMS14}. Due to the probabilistic nature of the BSM, the entanglement generation rate decays exponentially with the number of swap operations.

Another difficulty for using the standard multi-commodity flow-based approach for solving our problem occurs due to the end-to-end fidelity requirement in the demand. In a quantum network, the fidelity of an EPR-pair drops with each entanglement swap operation. This implies that a longer path-length results in a lower end-to-end fidelity. One can enhance the end-to-end fidelity using \emph{entanglement distillation}. However, some repeater technologies are unable to perform such quantum operations (for instance, the atomic ensemble-based quantum repeaters). Hence, for such cases, one can achieve the end-to-end fidelity requirement only by increasing the fidelity $F$ of the elementary pairs and reducing  the length of the discovered path. The first of these two options, the elementary pair fidelity, depends on the hardware parameters at fabrication. The second option is related to the path-length and it is under the control of the routing algorithm that determines the path from the source to the destination. For the routing algorithms, one possible way to guarantee the end-to-end fidelity is to put an upper bound on the discovered path-lengths. The standard multi-commodity flow-based LP-formulations does not take into account this path-length constraint. However, there exists one class of multi-commodity flow-based LP-formulation, called \emph{length-constrained multi-commodity flow} \cite{MM10}, which takes into account such constraints. In this paper, our proposed LP-formulation is inspired from the length-constrained multi-commodity flow problem and it takes into account the path-length constraint.


Given these differences, one might use the LP-formulation corresponding to the standard multi-commodity flow-based approach which we described before, but this would lead to a very loose upper bound on the achievable entanglement generation rate \cite{BAKE18,pir16,Pir19,Pir19bounds,AML16,AK17,RKBK18,BA17} in this setting. 

In our setting, it is not clear whether one can still have an efficient LP-formulation for the flow maximisation problem in a quantum internet. In fact, recently in \cite{li20} the authors mention that multi-commodity flow optimisation-based routing in a quantum internet may, in general, be an NP-hard problem. In this paper, we show that for some classes of practical entanglement generation protocols, one can still have efficient LP-formulation which maximises the total flow for all the commodities in polynomial  time (in the number of nodes).

The organisation of our paper is as follows: section \ref{sum_contr} provides a summary of our results. In section \ref{lp_form}, we give the exact LP-formulation for solving rate maximisation problem. In section  \ref{meth} we prove that our LP formulation solve the desired rate maximisation problem. Later, in the same section we show how one can achieve the entanglement generation rates proposed by the LP-formulation using an entanglement distribution protocol. We also analyse the complexity of our proposed algorithms in section \ref{meth}. We conclude our paper in section \ref{concl}.  

\section{Results}
\label{contr}

\subsection{Our contributions in a nutshell}
\label{sum_contr}
In this paper, all of our results are directed towards solving the rate maximisation problem in a quantum internet, where given a quantum network and a set of demands, the goal is to produce a set of paths such that the total end-to-end entanglement generation rate is maximised and in addition for each of the demands the end-to-end fidelity of the EPR-pairs satisfy a minimal requirement. 
In this section, we summarise our contributions.

\begin{itemize}
\item In order to solve the maximisation problem, we propose an LP-formulation called \emph{edge-based formulation} where both the number of variables and the number of constraints as well as the algorithm for solving such LPs scale polynomially with the number of nodes in the graph. However, it is non-trivial to see whether this formulation provides a valid solution to the problem or not. In this paper, by showing the equivalence between the edge-based formulation and another intuitive LP-formulation, called \emph{path-based formulation}, we show that the edge-based formulation provides a valid solution.


\item A disadvantage of the solution of the edge-based formulation is that it only gives the total achievable rate, not the set of paths which the underlying entanglement distribution protocol would use to distribute the EPR-pairs to achieve such rate. In this paper, we provide an algorithm, called the \emph{path extraction algorithm}, which takes the solutions of the edge-based formulation and for each of the commodities it extracts the set of paths to be used and the corresponding entanglement distribution rate along that path. The worst case time complexity of this algorithm is $O(|V|^4|E||D|)$, where $|V|, |E|$ denote the total number of nodes and edges in the network graph $G$ and $|D|$ denotes the total number of demands. What is more, we point out that there exists a practical entanglement distribution protocol along a path, called the \emph{prepare and swap protocol}, which achieves the rates (asymptotically) proposed by path extraction algorithm.  
\end{itemize}


\subsection{From the fidelity constraint to the path-length constraint}
\label{fid_to_len}

In a quantum network, the fidelity of the EPR-pairs drops with each entanglement swap operation. The fidelity of the output state after a successful swap operation depends on the fidelity of the two input states. If a mixed state $\rho$ has fidelity $F$, corresponding to an EPR-pair (say $|\Psi^+\rangle = \frac{1}{\sqrt{2}}(|00\rangle + |11\rangle)$) then the corresponding Werner state \cite{wern89} with parameter $W$ can be written as follows,

\begin{align*}
 &\rho = W|\Psi^+\rangle \langle ^+\Psi| + \frac{1-W}{4}~\mathbb{I}_4,
\end{align*}

where  $\mathbb{I}_4$ is the identity matrix of dimension $4$. The fidelity of this state is $\frac{1+3W}{4}$.

In this paper we assume that all the mixed entangled states in the network are Werner states. The main reason is that Werner states can be written as mixing with isotropic noise and hence form the worst case assumption. For the Werner states, if a node performs a noise-free entanglement swap operation between two EPR-pairs with fidelities $F$, then the fidelity of the resulting state is $\frac{1+3W^2}{4} $ which is equal to $1+ \frac{3}{4}\left(\frac{4F-1}{3}\right)^2$\cite{BDCZ98}.  

Here, each demand $(s_i,e_i,F_i)$ (where $1\leq i\leq |D|=k$) has $F_i$ as the end-to-end fidelity requirement. We assume that the fidelity of each of the elementary pairs is lower bounded by a constant $F$. Note that, in our model, we do not consider entanglement distillation, so in order to have a feasible solution, here we always assume that the fidelity requirement of the $i$-th demand, $F_i$ is at most the fidelity of the elementary pair $F > 0.5$. 
Corresponding to a demand $(s_i,e_i,F_i)$, if we start generating the EPR-pairs along a path $p = ((s_i,u_1), (u_1,u_2), \ldots , (u_{|p|-1},e_i))$, then the total number of required entanglement swap operations is $|p|-1$, where the path-length is $|p|$. As with each swap operation the fidelity drops exponentially, this implies the end-to-end fidelity will be $\frac{1+3W^{|p|}}{4} = 1+ \frac{3}{4}\left(\frac{4F-1}{3}\right)^{|p|}$. In order to satisfy the demand, $1+ \frac{3}{4}\left(\frac{4F-1}{3}\right)^{|p|}$ should be greater than $F_i$, i.e., $1+ \frac{3}{4}\left(\frac{4F-1}{3}\right)^{|p|} \geq F_i$. From this relation, we get the following constraint on the length of the path.

\begin{equation}
\label{path_length}
|p| \leq \left\lfloor\frac{\log \left(\frac{4F_i-1}{3}\right)}{\log \left(\frac{4F-1}{3}\right)}\right\rfloor. 
\end{equation}

This implies, for the $i$-th demand all the paths should have length at most $\left\lfloor\frac{\log \left(\frac{4F_i-1}{3}\right)}{\log \left(\frac{4F-1}{3}\right)}\right\rfloor$. In the rest of the paper, for the $i$-th demand we assume,

\begin{equation}
\label{len_const}
l_i := \left\lfloor\frac{\log \left(\frac{4F_i-1}{3}\right)}{\log \left(\frac{4F-1}{3}\right)}\right\rfloor.
\end{equation}

Using this constraint on the number of intermediate repeaters, we can rewrite the demand set $D$ in following way,

\begin{equation}
\label{eq:dem_len}
D = \{(s_1,e_1,l_1), \ldots , (s_k,e_k,l_k)\}.
\end{equation}


\subsection{LP-formulation}
\label{lp_form}

In this section, we construct the LP-formulation for computing the maximum flow in a quantum network. For the simplicity, we consider the network $G=(V,E,C)$ as a directed graph and construct all the LP-formulations accordingly. Note that, one can easily extend our result to an undirected graph, just by converting each of the edges which connects two nodes $u,v$ in the undirected graph into two directed edges $(u,v)$ and $(v,u)$.

For the entanglement distribution rate, here we let the achievable rate between two end nodes $s_i,e_i \in V$ along a repeater chain (or a path) $p = ((s_i,u_1), (u_1,u_2), \ldots , (u_{|p|-1},e_i))$ be $r_p$ such that,

\begin{equation}
\label{eq:rate}
r_p \leq (\pswap)^{|p|-1}\min\{C(s_i,u_1), \ldots ,C(u_{|p|-1},e_i)\}, 
\end{equation}

where $C(u,v)$ denotes the capacity of the edge $(u,v)\in E$ and $|p|$ is the length of the path and $q$ is the success probability of the BSM. Later, in section \ref{meth} we show that there exists a practical protocol called prepare and swap which achieves this rate requirement along a path. For an idea of such protocol we refer to the example of figure \ref{exam0}. In the next section, we give the LP-construction of the edge-based formulation.

\subsubsection{Edge-based formulation}
\label{lbf_net}

In this section, we give the edge-based formulation for solving the rate maximisation problem. In this formulation, we assign one variable to each of the edges of the network. As the total number of edges, $|E|$, in a graph of $|V|$ nodes scales quadratically with the number of nodes in the graph, the total number of variables is polynomial in $|V|$. This makes the edge-based formulation efficient. However, it is challenging to formulate the path-length constraint in this formulation. The main reason is that, an edge can be shared by multiple paths of different lengths and the variables of the edge-based formulation corresponding to that edge do not give any information about the length of the paths. In this paper, we borrow ideas from the length-constrained multi-commodity flow \cite{MM10} to handle this problem. In order to implement the length constraint we need to modify the network graph $G$ as well as the demand set $D$. In the next section, we show how to modify the network graph and the demand set. 
\vspace{0.2in}

\textbf{Network modification.} To implement the length constraint in the edge-based formulation, we define an expanded graph $G' = (V',E', C')$ from $G = (V,E,C)$ such that it contains $\lmax +1$ copies of each of the nodes, where $\lmax = \max\{l_1, \ldots , l_k\}$ and for all $1\leq i \leq k$, $l_i$ denotes the length constraint of the $i$-th demand $(s_i,e_i,l_i)$. For a node $u \in V$, we denote the copies of $u$ as $u^0,u^1, \ldots , u^{\lmax}$. We denote $V^j$ as the collection of the $j$-th copy of all the nodes. This implies, $V' = \bigcup_{j=0}^{\lmax}V^j$. For each edge $(u,v) \in E$ and for each $0\leq j < \lmax$, we connect $u^j \in V'$ and $v^{j+1} \in V'$ with an edge, i.e., $(u^j , v^{j+1}) \in E'$. For each edge $(u^j,v^{j+1}) \in E'$ we define $C'(u^j,v^{j+1}) := C(u,v)$. In figure \ref{exam3} we give an example of this extension procedure corresponding to a network graph of figure \ref{exam_glob}.

\begin{figure}
\centering
    \includegraphics[width=0.7\textwidth, left]{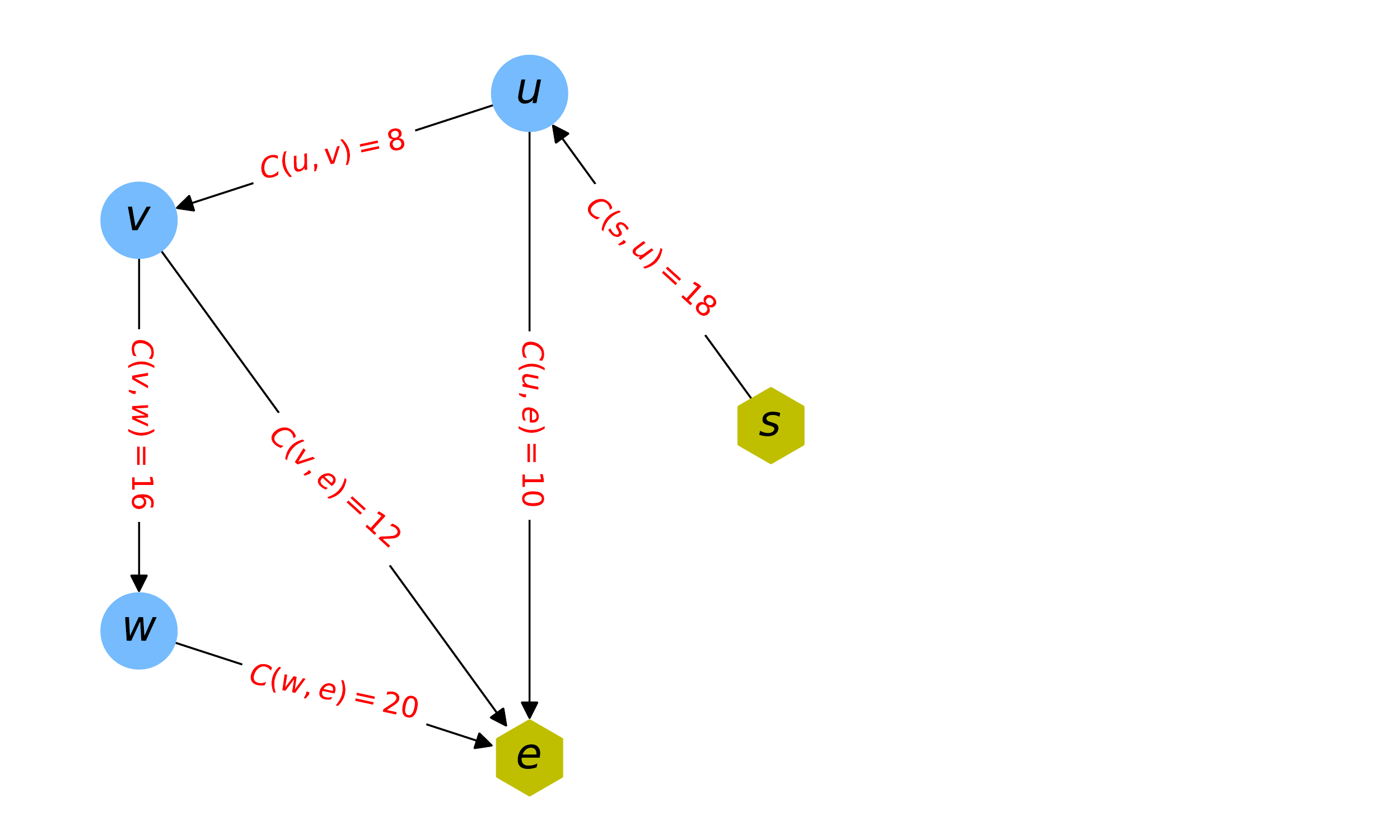}
    \caption{Network Graph $G = (V,E, C)$, with demand $D = \{(s,e,2)\}$ and $q = \frac{1}{2}$, i.e., here the source, $s$ wants to share EPR-pairs with $e$. In this network, for each edge $(u,v)\in E$, the quantity, $C(u,v)$ denotes the EPR-pair generation rate corresponding to that edge. }
    \label{exam_glob}
\end{figure}

\begin{figure}
    \includegraphics[width=0.8\textwidth, left]{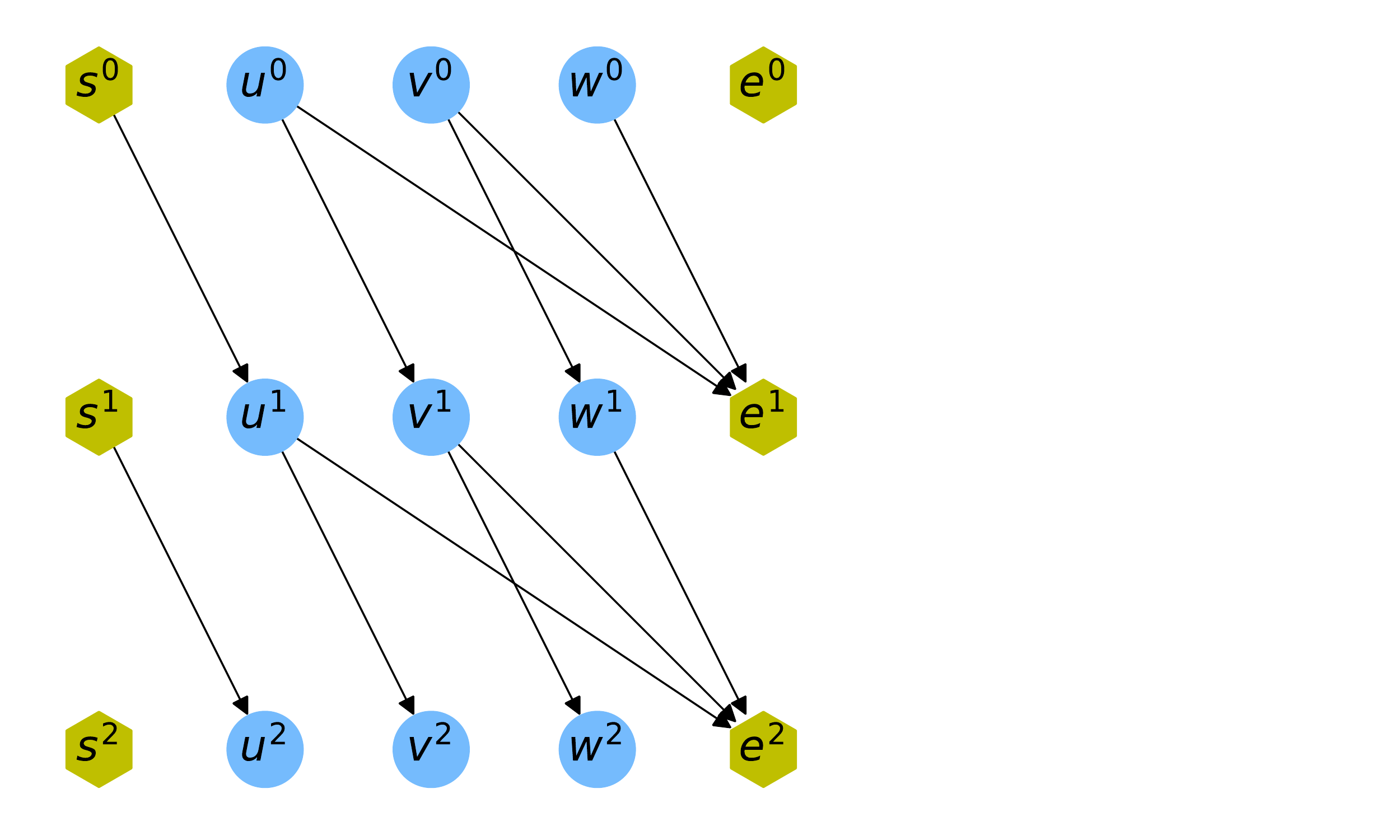}
    \caption{Extended Network Graph $G' = (V',E', C')$ of the original graph in figure \ref{exam_glob}. Here we are interested in finding the paths between $s$ and $e$ with path-length at most $2$. For the construction of $G'$, we create three copies $u^0,u^1,u^2$ of each of the nodes $u \in G$. There are five nodes $s,u,v,w,e$ in the graph of figure \ref{exam_glob}. This implies, in this modified graph we have $(s^0,s^1,s^2),(u^0,u^1,u^2), (v^0,v^1,v^2),(w^0,w^1,w^2),(e^0,e^1,e^2)$, fifteen nodes. In this original graph of figure \ref{exam_glob}, if $u$ is connected to $v$, then in this modified graph, we connect, $u^0$ with $v^1$ and $u^1$ with $v^2$. Note that, all the paths from $s^0$ to $e^1$ or $e^2$ has hop length at most $2$. In this modified graph the new demand set corresponding to the demand $D = \{(s,e)\}$ in the original graph $G$ in figure \ref{exam_glob} is $ D_{\mathrm{mod}} = \{(s^0,e^1),(s^0,e^2)\} $.}
    \label{exam3}
    \end{figure}

According to this construction, the length of all the paths in $G'$ from $s^0_i$ to $e^j_i$ is exactly $j$ and all the paths from $s^0_i$ to $e^1_i, \ldots , e^{l_i}_i$ have a path-length at most $l_i$. This implies, for the edge formulation, the $i$-th demand $(s_i,e_i,l_i)$ can be decomposed into $l_i$ demands $\{(s^0_i,e^1_i), \ldots , (s^0_i,e^{l_i}_i)\}$. This implies the total modified demand set would become,

\begin{equation}
\label{mod_dem}
\dmod := \{D_1, \ldots , D_{k}\},
\end{equation}

where for each $1 \leq i \leq k$, $D_i = \{(s^0_i,e^1_i), \ldots , (s^0_i,e^{l_i}_i)\}$. Note that, the new demand set $\dmod$ doesn't have any length or fidelity constraint. 


\vspace{0.2in}

\textbf{Edge-based formulation.}  Here, we give the exact LP-construction of the edge-based formulation. Note that, one can use a standard LP-solver (in this paper we use Python 3.7 pulp class \cite{LPPulp}) to solve this LP (see figure \ref{graph_exam_surf} for an example). 

In this LP-construction, for the $i,j$-th demand $(s^0_i,e^j_i) \in D_i$ (where $D_i \in \dmod$), we define one function $g_{ij}: E' \rightarrow \R^+$. The value of this function $g_{ij}(u,v)$, corresponding to an edge $(u,v) \in E'$ denotes the flow across that edge for the $i,j$-th demand. We give the edge-based formulation in table \ref{EBF}. 

\begin{table}[ht]
\begin{mdframed}
\begin{align}
\nonumber
&\text{Maximize}\\ \label{edge_form_plen3}
&\sum_{i=1}^k\sum_{j=1}^{l_i} (\pswap)^{j-1}\sum_{\substack{v^1:(s^0_i,v^1) \in E'}}  g_{ij}(s^0_i,v^1). \\ \nonumber
&\text{Subject to:}\\ \nonumber 
&\text{For all }  1 \leq i \leq k,  1 \leq j \leq  l_{i},  0 \leq t \leq \lmax-1, (u,v) \in E,\\  \label{const_edge31}
& g_{ij}(u^t,v^{t+1}) \geq 0, \text{ and} \\ \label{const_edge32}
& \sum_{i=1}^k \sum_{j=1}^{l_i} \sum_{t=0}^{\lmax -1}g_{ij}(u^t,v^{t+1}) \leq C(u,v).\\ \nonumber
&\text{For all } 1 \leq i \leq k, 1 \leq j \leq  l_{i}, u',v',w' \in V' : v \neq s^0_i, v \neq e^{j}_i , \\ \label{const_edge33}
&\sum_{u': (u',v') \in E'} g_{ij}(u',v')= \sum_{w': (v',w') \in E'} g_{ij}(v',w').
\end{align}
\end{mdframed}
\caption{Edge-based formulation}
\label{EBF}
\end{table}

In the objective function equation \ref{edge_form_plen3} of the edge-based formulation, the sum $\sum_{\substack{v^1:(s^0_i,v^1) \in E'}}g_{ij}(s^0_i,v^1)$ denotes the entanglement distribution rate between $s^0_i,e^j_i$, for a fixed $i,j$, when $(\pswap) = 1$. Note that, according to the construction of the graph $G'$, all the paths between $s^0_i,e^j_i$ have path-length exactly $j$. This implies that $(\pswap)^{j-1}\sum_{\substack{v^1:(s^0_i,v^1) \in E'}}g_{ij}(s^0_i,v^1)$ denotes the entanglement distribution rate between $s^0_i,e^j_i$ and $\sum_{i=1}^k\sum_{j=1}^{l_i} (\pswap)^{j-1}\sum_{\substack{v^1:(s^0_i,v^1) \in E'}} g_{ij}(s^0_i,v^1)$ denotes the total entanglement distribution rate for all the demands. The condition in equation \ref{const_edge32} represents the capacity constraint and condition equation \ref{const_edge33} denotes the flow conservation property. 

Although this construction is efficient, it does not give any intuition whether it will solve the rate maximisation problem. In section \ref{meth} we give another intuitive LP-formulation, called \emph{path-based formulation} and we explain the equivalence between the path-based and the edge-based formulation. This proof guarantees that the solution of the edge-based formulation gives a solution for the rate maximisation problem. 

\begin{figure}
    \includegraphics[width=0.7\textwidth, left]{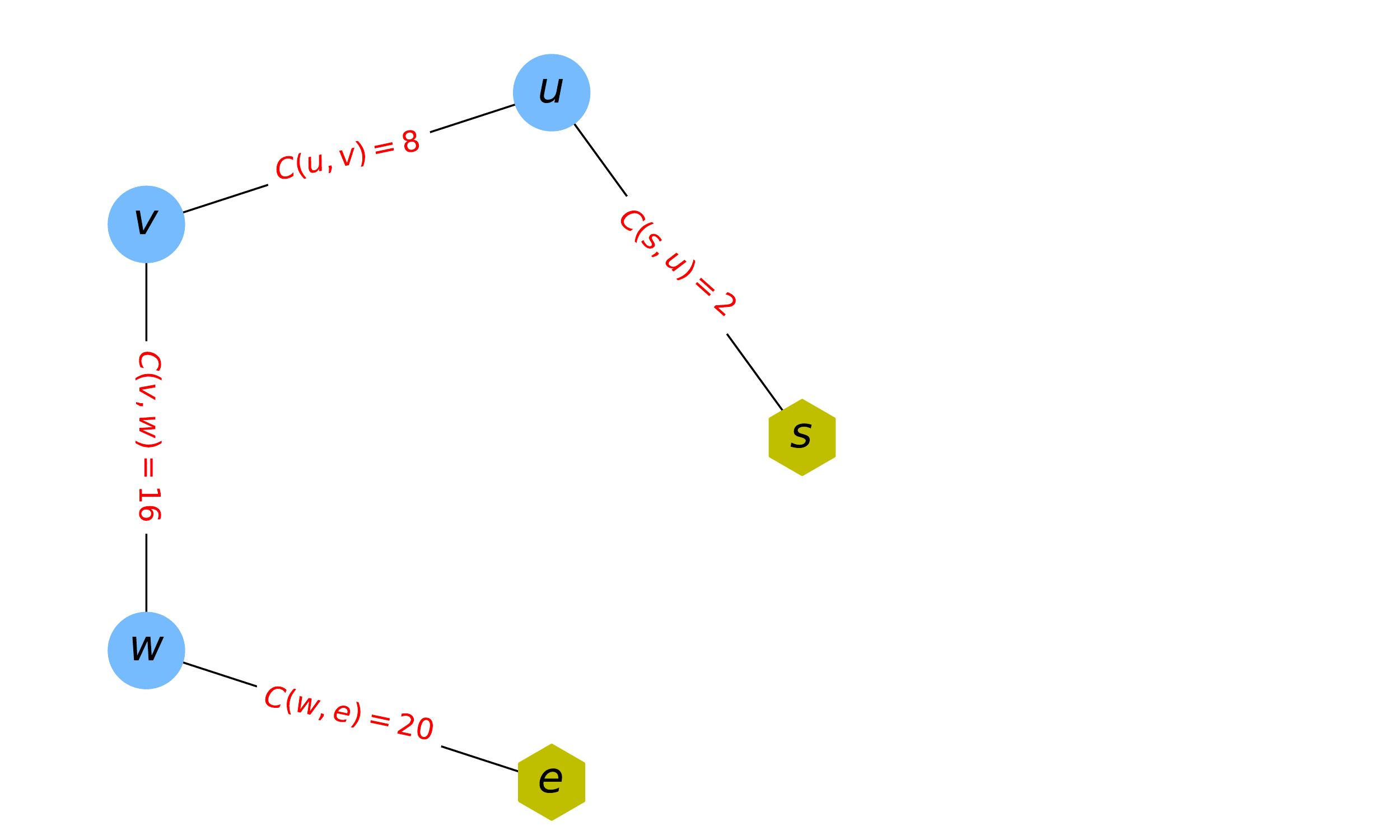}
    \caption{Repeater chain network with three intermediate nodes and one source-destination pair $s,e$. Here, there is a demand to create EPR-pairs between source $s$ and destination $e$. The capacity of an edge $(u,v)$ is given by the function $C(u,v)$. All the repeaters use BSM for the entanglement swap. Here we assume that the success probability of the BSM is $q = \frac{1}{2}$. 
    In the prepare and swap protocol, all the intermediate repeaters perform the swap operation at the same time. This implies, the expected entanglement generation rate between $s$ and $e$ for this protocol would be $r_{s,e} = (\pswap)^{4-1} \min\{C(s,u),C(u,v),C(v,w),C(w,e)\} =2(\frac{1}{2})^{3}  = 0.25$. }
    \label{exam0}
\end{figure}

\subsubsection{Path-extraction algorithm}
\label{sec:path_extrac}

The edge-based formulation, proposed in the last section is a compact LP construction and it can be solved in polynomial time. However, this solution only gives the total achievable rates for all the commodities, it does not give us any information about the set of paths corresponding to each commodity along which one should distribute the EPR-pairs to maximise the entanglement distribution rate. In this section, we give an efficient method for doing so in algorithm \ref{flow_dec3}, which takes the solution of the edge-based formulation and produces a set of paths as well as the achievable rates across each path for each of the demands. Later, in section \ref{meth} we show that the set of extracted paths satisfies the path-length constraint for each of the demands and if one uses the prepare and swap method for distributing entanglement across each of the paths then one can achieve the entanglement distribution rate suggested from this algorithm.

\begin{algorithm}
\begin{mdframed}
\vspace{0.1in}
\hspace*{\algorithmicindent} \textbf{Input: } The solution of the edge-based formulation, i.e., $\{\{g_{ij}(u',v')\}_{(u',v')\in E'}\}_{1\leq i\leq k, 1 \leq j \leq l_i}$. \\
    \hspace*{\algorithmicindent} \textbf{Output: } Set of paths as well as the rate across each of the paths $\{\p_{i,j}\}_{1\leq i\leq k, 1 \leq j \leq l_i}$.
\begin{algorithmic}[1]
 \FOR{($i=1$; $i \leq k$; $i++$)}
  \FOR{($j=1$; $j \leq l_i$; $j++$)}
  \State $m =0$.
  \State $F_{i,j}=g_{ij}$.
  \State $\p_{i,j} = \emptyset$.
 \WHILE{$\sum_{v:(s^0_i,v^1)\in E'} F_{i,j,m}(s^0_i,v^1) >0$}
  \State Find a path $p_{j,m}$ from $s^0_i$ to $e^j_i$ such that,
   \State $\forall(u',v')\in p_{j,m}$, $F_{i,j,m}(u',v') >0$
  \State $\ra_{p_{j,m}} = (\pswap)^{j-1}\{\min_{(u',v')\in p_{j,m}}\{F_{i,j,m}(u',v')\}$
   \State $\forall (u',v')\in p_{j,m}$,
  \State $F_{i,j,m+1}(u',v') = F_{i,j,m}(u',v') - \frac{\ra_{p_{j,m}}}{(\pswap)^{j-1}}$.
  \State $\p_{i,j} = \p_{i,j} \cup (p_{j,m},\ra_{p_{j,m}})$.
    \State $m=m+1$.
  \ENDWHILE
  \ENDFOR
\ENDFOR
\end{algorithmic}
\vspace{0.1in}
\end{mdframed}
 \caption{Path Extraction and Rate Allocation Algorithm.}
 \label{flow_dec3}
\end{algorithm}

\subsubsection{Example}

In this section, we give an example of our algorithms on a real world network topology $G = (V,E,C)$. In order to do so, we choose a SURFnet topology from the \emph{internet topology zoo} \cite{KNH11}. This is a publicly available example of a dutch classical telecommunication network, with $50$ nodes (see figure \ref{exam_surf}). In this network, we assume that each of the nodes in the network is an atomic ensemble and linear optics-based quantum repeater. These types of repeaters can generate elementary pairs almost deterministically \cite{GMLC13, SSMS14}, due to their multiplexing abilities. Here we assume that the elementary pair generation is a deterministic process. The elementary pair generation rate depends only on the entanglement source and its efficiency. Here, we choose the elementary pair generation rate uniformly randomly from $1$ to $400$ EPR-pairs per second. The success probability for the BSM, $q$ is considered to be $0.5$ for all the nodes. We also assume that the memory efficiency is one and all the memories are on-demand memories, i.e., they can retrieve the stored EPR-pairs whenever required \cite{GMLC13}. 

\begin{figure}[h]
    \includegraphics[width=0.48\textwidth, left]{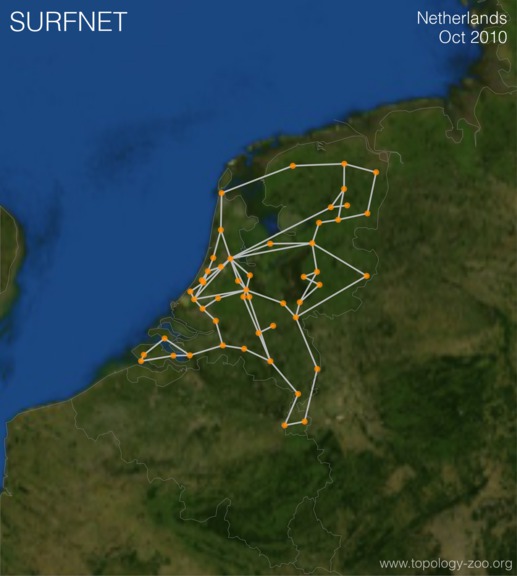}
    \caption{Pictorial view of the dutch SURFnet network, taken from internet topology zoo \cite{KNH11}. In this paper, we suppose that this is a quantum network and all the nodes in this network are quantum repeater nodes and some of them are the end nodes. We also assume that the repeaters can generate EPR-pairs using the communication links, shown in this figure. We run our proposed edge-based formulation and path-extraction algorithm on this network topology for maximising the total end-to-end EPR-pairs generation rate. We refer to figure \ref{graph_exam_surf} for detailed description.}
    \label{exam_surf}
\end{figure}


We additionally assume that the minimum storage time of all the memories is the maximum round trip communication time between any two nodes in the network that are directly connected by an optical fiber. In the SURFnet network, the maximum length of the optical fiber connecting any two nodes is $50$ km. Hence, the minimum storage time is $\frac{2\times 50000}{c} = 500 \mu s$, where $c$ is the speed of light in a telecommunication fiber, which is approximately $c \approx 2\times 10^8 $ meters per second. In this example, we consider the fidelity of all elementary pairs to be $F = 0.9925$. We generate the demands uniformly at random, i.e., we choose the sources and the destinations uniformly at random between $1$ and $50$. We also choose the end-to-end fidelity randomly from $0.93$ to $0.99$ for each of the demands. Substituting these fidelity constraints in equation \ref{len_const}, we obtain a maximum path-length $l_{\text{max}} = 8$. Here, we have generated only four demands and we assume that all the entanglement distribution tasks are performed in parallel. We optimise the total achievable rate using the LP solver available in the Python $3.7$ pulp class \cite{LPPulp}. The rates and the paths corresponding to the four demands are shown in figure \ref{graph_exam_surf}. An overview of the entire procedure is given in algorithm \ref{qpce} and the code of this implementation is publicly available in \cite{githubcode_rout20} and the data set can be found in \cite{data_set_routing_20}.

\begin{algorithm}
\begin{mdframed}
\vspace{0.1in}
\hspace*{\algorithmicindent} \textbf{Input: }Set of demands $D$, Network Graph $G = (V,E,C)$.\\
\hspace*{\algorithmicindent} \textbf{Output: }Set of paths $\p_i$ for the $i$-th demand and rate $r_p$, across each of the path $p\in \p_i$.
\begin{algorithmic}[1]
\State Convert the fidelity requirement $F_i$ of the $i$-th demand $(s_i,e_i,F_i)\in D$ into a path-length constraint $l_i$ (use equation \ref{len_const}).
 \State Compute the modified demand set $\dmod$ from $D$ according to the path-length constraint which we compute at the previous step (see subsection \ref{lbf_net} for details).
 \State Compute the extended network $G'$ from $G$ using the procedure, described in subsection \ref{lbf_net}.
 \State Implement the edge-based LP-formulation, proposed in table \ref{EBF} and compute the total maximum achievable rate $\sum_{i=1}^k r_i$ using the LP solver available in the Python $3.7$ pulp class \cite{LPPulp}.
 \State For the $i$-th demand, extract the set of paths $\p_i$ and compute the required rate $r_p$ across each of the paths $p\in \p_i$, such that $r_i = \sum_{p \in \p_i}r_p$ using the algorithm \ref{flow_dec3}.
 \end{algorithmic}
  \vspace{0.1in}
 \end{mdframed}
 \caption{Method to solve the rate maximisation problem.}
 \label{qpce}
\end{algorithm}

\begin{figure}
    \includegraphics[width=0.65\textwidth, left]{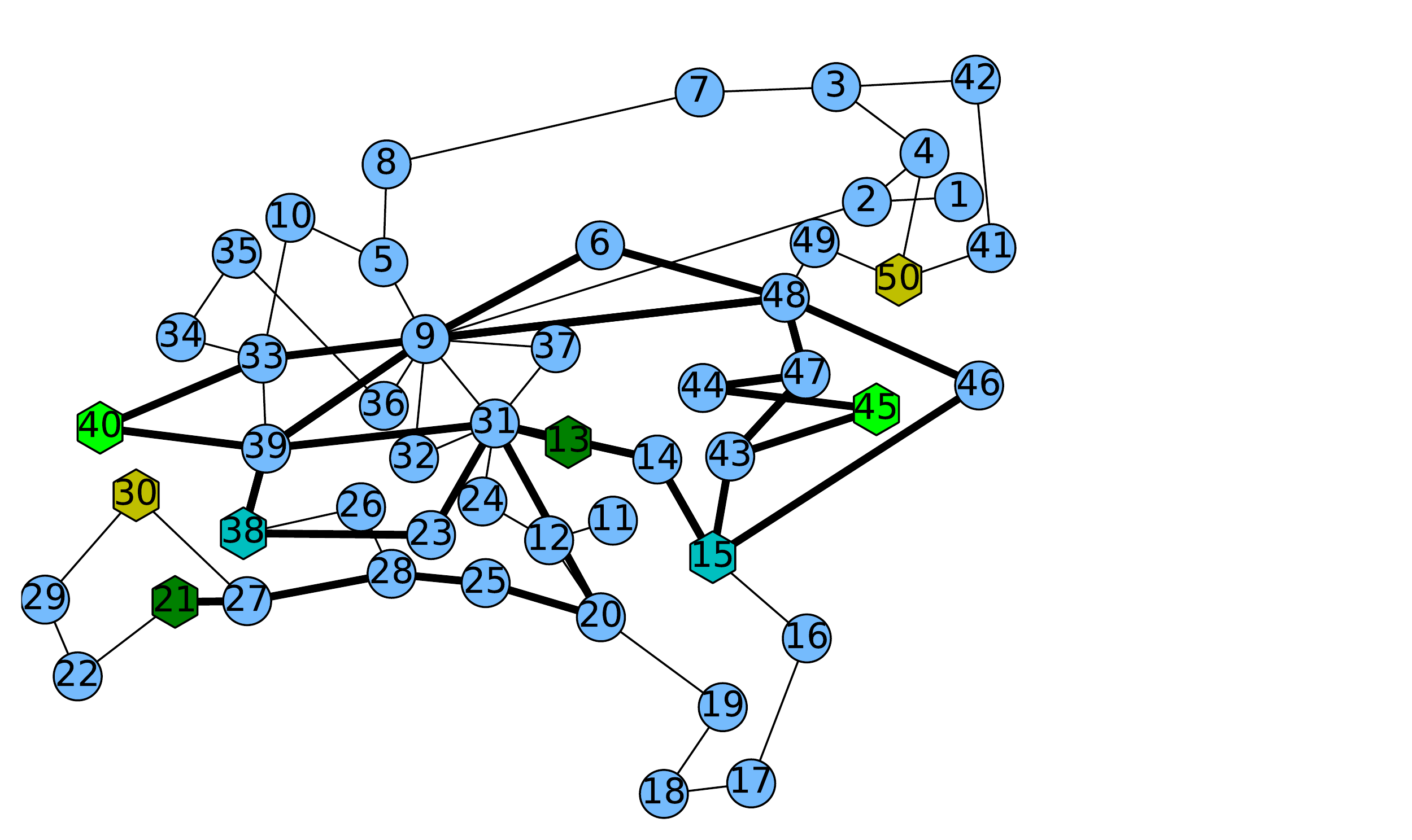}
    \caption{This figure is the graphical representation of the SURFnet network of figure \ref{exam_surf}. We run our rate maximisation algorithm on this graph. Here, we consider the demand set as $D = \{(40,45,7), (21,13,6),(30,50,7),(38,15,8)\}$. The end-nodes are represented using square boxes. The capacity of each of the link is chosen uniformly at random between $[1,400]$-EPR-pairs per second. We assume that the success probability of the bell-state measurement is $q = 0.5$ and the fidelity of the each of the elementary pairs is $F=0.9925$ and the value of the corresponding Werner parameter $W= 0.99$. Given such a network and demand set, we run our edge-based formulation proposed in table \ref{EBF} and get a total achievable rate of $18.140625$ EPR-pairs per second. Next, we feed the solution to the path extraction algorithm, proposed in algorithm \ref{flow_dec3} and extract the paths for each of the demands. In this figure, all the thick edges, participate in the paths. For the demand, $(45,40,7)$ we get two paths, $((45,43),(43,47),(47,48),(48,9),(9,33),(33,40))$ and $((45, 44), (44, 47), (47, 48), (48, 9), (9, 39), (39, 40))$. Note that each of the paths has path-length $6$. Hence the end-to-end fidelity for this demand is $\frac{1+3(0.99)^6}{4} = 0.955$. Similarly, for the demand $(21,13,6)$ the extracted path is $((21, 27), (27, 28), (28, 25), (25, 20), (20, 31), (31, 13))$ and end-to-end fidelity is $\frac{1+3(0.99)^6}{4} = 0.955$. For the demand $(30,50,7)$, there is no path of length smaller than $8$. Hence, this demand can not be satisfied in this model. For the demand $(38,15,8)$, we extract four paths, $((38, 23), (23, 31), (31, 14), (14, 15))$, and $((38, 39), (39, 31), (31, 14), (14, 15))$, and  $((38, 39), (39, 9),$ $(9, 48), (48, 46), (46, 15))$, $((38, 39), (39, 9), (9, 6), (6, 48),$ $(48, 47), (47, 43), (43, 15))$ and the end-to-end fidelity for the paths are $0.97$, $0.97$, $0.9625$ and $0.9475$. If we use the prepare and swap protocol for generating EPR-pairs, along each of the paths, then we can achieve the total entanglement distribution rate of $18.140625$ EPR-pairs per second.}
    \label{graph_exam_surf}
\end{figure}

\section{Methods}
\label{meth}

In this section, we provide more details of the results, presented in the last section. First we propose an intuitive LP-construction, called path-based formulation for solving the rate maximisation problem. Later, we give an idea about how to prove the equivalence between both of the proposed LP-constructions. Next, we explain the prepare and swap protocol in detail and show that using this protocol one can achieve the optimised entanglement generation rates along the paths from algorithm \ref{flow_dec3}. We finish this section with the complexity analysis of the edge-based formulation and algorithm \ref{flow_dec3}.

\subsection{Path-based formulation}
\label{p_form_prep_swap}

In this formulation, for each path $p$ corresponding to each source-destination pair $s_i,e_i$, we define one variable $r_p$. This $r_p$ denotes the achievable rate between $s_i,e_i$ along the path $p$. From equation \ref{eq:rate}, we have, for all $(u,v) \in p$,

\begin{equation}
r_p \leq (\pswap)^{|p|-1}C(u,v)  
\end{equation}

Note that, for the $i$-th demand the total rate $r_i$ can be achieved via multiple paths. Let $\p_i$ be the set of all possible paths which connect $s_i$ and $e_i$, hence,

\begin{equation}
r_i =\sum_{p \in \p_{i}} r_p.
\end{equation}

From equation \ref{eq:dem_len}, we have that for all the paths $p \in \p_i$, $|p|\leq l_i$. We give the exact LP-formulation which takes into account all these constraints in table \ref{path_form_normal}.

\begin{table}[ht]
\begin{mdframed}
\begin{alignat}{3}
\nonumber
&\text{Maximize}~~~~~~~~ \\ \label{simp_opt_cr_swap_len1}
&\sum_{i=1}^k  \sum_{p \in \p_i}r_p &\\ \nonumber
&\text{Subject to :}~~~~~~~\\ \label{eq:path_rate}
&\sum_{i=1}^k \sum_{\substack{p \in \p_i :\\ (u,v) \in p}}\frac{r_p}{(\pswap)^{|p|-1}} \leq C(u,v), &\forall ~(u,v) \in E, \\ \nonumber
&r_p \geq 0, &\forall i \in \{1, \ldots ,k\},\forall ~p\in \p_i, \\ \nonumber
&|p| \leq l_i &\forall i \in \{1, \ldots ,k\},\forall ~p\in \p_i.
\end{alignat}
\vspace{0.1in}
\end{mdframed}
\caption{Path-Based Formulation}
\label{path_form_normal}
\end{table}
Note that, in this LP-formulation, as we introduce one variable corresponding to each path so the total number of variables is of the order $O(|V|!)$. This scaling stops us from using the path-based formulation for solving the maximum flow problem. However, this formulation helps us to prove that the edge-based formulation gives a solution to the rate maximisation problem. In the next section, we give an idea of the proof. The full details of the proof are given in the supplementary material.

\subsection{Equivalence between the two formulations}
\label{ref:equiv}

The idea of the proof of equivalence is that, first we try to construct a solution of the edge-based formulation from the solution of the path-based formulation and then we try to construct a solution of the path-based formulation from the edge-based formulation. If both of the constructions are successful then we can conclude that both of the formulations are equivalent. 

In table \ref{path_form_normal} we provide the path-based formulation on the basis of the network graph $G = (V,E,C)$ and the demand set $D$, whereas in table \ref{EBF} we propose the edge-based formulation using the network graph $G' = (V',E',C')$ and the demand set $\dmod$. In order to show the equivalence between both of the formulations, we first need to rewrite the path-based formulation using the network graph $G' = (V',E',C')$ and the demand set $\dmod$. The next section focuses on this. After that, we focus on proving the equivalence.

\subsubsection{Path-based formulation on the modified network}
\label{p_form_prep_swap}

In this section, we construct the path-based formulation on the basis of the new demand set $\dmod$, defined in equation \ref{mod_dem} and the modified network $G' = (V',E',C')$. In this demand set, we use the term $i,j$-th demand to denote the $j$-th source destination pair $(s^0_i,e^j_i)$ of the $i$-th demand $(s_i,e_i) \in D$. We denote the set of all possible paths for the $i,j$-th demand as $\p_{i,j}$ and we assign a variable $r_p$, corresponding to each path $p \in \p_{i,j}$. From equation \ref{eq:rate} we have, for all the edges $(u',v') \in E'$ in a path $p$, 

\begin{equation}
r_p \leq (\pswap)^{|p|-1}C'(u',v').  
\end{equation}

Note that, from the construction of $G'$ and the new demand set $\dmod$, the length of all the paths in $\p_{i,j}$ is $j$. This implies, if $\p_i$ denotes the set of all possible paths for the $i$-th demand $(s_i,e_i,l_i)\in D$, then $\p_i = \bigcup_{j=1}^{l_i} \p_{i,j}$. For a fixed source-destination pair $(s_i,e_i)$, if along a path $p$, the achievable rate is $r_p$ then,

\begin{equation}
r_i = \sum_{j=1}^{l_i} \sum_{p \in \p_{i,j}} r_p.
\end{equation}

We give the exact formulation in table \ref{path_form}.

\begin{table}[ht]
\begin{mdframed}
\begin{alignat}{3}
\nonumber
&\text{Maximize} \quad \quad \quad \quad \quad \quad \quad \quad\\ \label{opt_cr_swap_len1}
&\sum_{i=1}^k \sum_{j=1}^{l_i}\sum_{p \in \p_{i,j}}r_p \quad \quad \quad \quad \quad \quad \\ \nonumber
&\text{Subject to :}\quad \quad \quad \quad \quad \quad \quad \quad \\ \label{swp_const1}
&\sum_{i=1}^k \sum_{j=1}^{l_i} \sum_{t=0}^{\lmax -1}\hspace{-0.13in}\sum_{\substack{p \in \p_{i,j} :\\ (u^t,v^{t+1}) \in p}}\hspace{-0.13in}\frac{r_p}{(\pswap)^{|p|-1}} \leq C(u,v), ~ &\forall ~(u,v) \in E,\\\ \nonumber
&\forall i \in \{1, \ldots ,k\}, \forall  j \in \{1, \ldots, l_i\}, \forall ~p\in \p_{i,j}, &\\
&r_p \geq 0.\quad \quad \quad \quad \quad \quad \quad \quad&
\end{alignat}
\vspace{0.1in}
\end{mdframed}
\caption{Path-Based Formulation on the Modified Network}
\label{path_form}
\end{table}

\subsubsection{Path-based formulation to edge-based formulation}
\label{pe_prep_swap}

In this section we construct a solution of the edge-based formulation from the solution of the path-based formulation. 
In the edge-based formulation, we construct a new demand set $\dmod$, where the $i$-th demand $(s_i,e_i,l_i)$ in the original demand set $D$ is decomposed into $l_i$-demands $(s^0_i,e^1_i), \ldots , (s^0_i,e^{l_i}_i)$. Recall that, the quantity $l_i$ denotes the upper bound on the discovered path-length which reflects the lower bound on the required end-to-end fidelity of the EPR-pairs generated between $s_i$ and $e_i$ (See section \ref{fid_to_len} for the details).  Here, each of the $(s^0_i,e^j_i)$ are the nodes in the modified graph $G'$. From the construction of $G'$, it is clear that all the paths from $s^0_i$ to $e^j_i$ have length $j$. If we have the solutions of the path formulation, proposed in table \ref{path_form}, then from there, for each edge $(u',v') \in E'$ and for the $i,j$-th demand $(s^0_i,e^j_i)$ we define the value of $\g_{ij}(u,v)$ as,

\begin{equation}
\label{eq_gij_prep_swap}
\g_{ij}(u',v') := \sum_{\substack{p \in P_{i,j}, \\ (u',v') \in p}} \frac{r_p}{(\pswap)^{j-1}}. 
\end{equation}

One can easily check that the definition of $\g_{ij}$, defined in equation \ref{eq_gij_prep_swap} satisfies all the constraints of the edge-based formulation, proposed in the equations \ref{const_edge31}, \ref{const_edge32}, \ref{const_edge33}. Moreover, with this definition of the $\g_{ij}$, the objective function (equation \ref{edge_form_plen3}) of the edge-based formulation becomes same as the objective function of the path-based formulation. This shows that, the optimal value of the edge-based formulation is at least as good as the solution of the path-based formulation.

\subsubsection{Edge-based formulation to path-based formulation}
\label{etp_prep_swap}

Here, we construct a solution of the path-based formulation from the solution of the edge-based formulation. We use the algorithm, proposed in algorithm \ref{flow_dec3} for extracting the paths and corresponding rate along that path. In the algorithm we compute the rate $\ra_{p_{j,m}}$ corresponding to a path $p_{j,m}$ for a demand $(s^0_i,e^j_i) \in \dmod$ as follows,

\begin{equation}
\label{prep_swap_rate}
\ra_{p_{j,m}} := (\pswap)^{j-1}\min_{(u',v')\in p_{j,m}}\{F_{i,j,m}(u',v')\},
\end{equation} 

where the function $F_{i,j,m}(u',v')$ is related to $g_{i,j}(u',v')$ and $\ra_{p_{j,m}}$. Here, $F_{i,j,0} = g_{i,j}$ and for all $m \geq 0$ we compute $F_{i,j,m+1}$ as follows,

\begin{equation}
\label{prep_swap_update}
\forall (u,v) \in p_{j,m}~~F_{i,j,m+1}(u',v')= F_{i,j,m} (u',v')- \frac{\ra_{p_{j,m}}}{(\pswap)^{j-1}}. 
\end{equation}

We give a detailed proof of the fact that the paths as well as the allocated rate corresponding to each path, extracted from algorithm \ref{flow_dec3} corresponds to the feasible solution of the path-based formulation in the appendix \ref{app_etp_prep_swap}. Moreover, if we consider the equation \ref{prep_swap_rate} as the definition $\ra_{p_{j,m}}$ then the objective function of the edge-based formulation is same as the objective function of the edge-based formulation. This shows that this is a valid solution of the path-based formulation. In the last section we showed that the solution of the path-based formulation is at least as good as the solution of the edge-based formulation. Hence, the solutions of both of the formulations are equivalent. 

\subsection{Prepare and swap protocol and the LP-formulations}

In this section, we explain the prepare and swap protocol and show that with this protocol one can achieve the entanglement distribution rate along a path, proposed by algorithm \ref{flow_dec3}. In the next section we explain the protocol for a repeater chain with a single demand. After that we extend the protocol for the case for multiple demands.

\subsubsection{Prepare and swap protocol for a repeater chain}
Suppose in a repeater chain $u_0=s, u_1, \ldots, u_n, u_{n+1} = e$, where for all $0 \leq i \leq n$, the nodes $u_{i},u_{i+1}$ are neighbours of each other and $s$ would like to share EPR-pairs with $e$. In this protocol, first, all the repeaters generate entanglement with its neighbours/neighbour in parallel and store the entangled links in the memory. Here we assume that entanglement generation across an elementary link is a deterministic event, i.e., the entanglement generation probability per each attempt is one. An intermediate node $u_i$ which resides between $u_{i-1}$ and $u_i$, performs the swap operation when both of the EPR-pairs between $u_{i-1},u_i$ and $u_i, u_{i+1}$ are ready. As we assume that each of the swap operations is probabilistic, so the entanglement generation rate with this protocol is lower. However, due to the independent swap operations, the protocol doesn't need a long storage time. This makes the protocol more practical.

We give an example of such an entanglement generation protocol on a repeater chain with three intermediate nodes and one source-destination pair in figure \ref{exam0}. In the next lemma, we derive an analytical expression of the end-to-end entanglement generation rate in a repeater chain network for the prepare and swap protocol. Note that, a variant of the proof of lemma \ref{egr_prep_swap} can be found in the literature \cite{SSMS14}. For completeness, in the supplementary material we include the proof of this lemma. 

\begin{lemma}
\label{egr_prep_swap}
In a repeater chain network with $n+1$ repeaters $\{u_0, u_1, \ldots, u_{n}\}$, if the probability of generating an elementary pair per attempt is one, the probability of a successful BSM is $(\pswap)$, the capacity of an elementary link $(u_i,u_{i+1})$ (for $0 \leq i \leq n-1$) is denoted by $C_i$ and if the repeaters follow the prepare and swap protocol for generating EPR-pairs, then the expected end-to-end entanglement generation rate $r_{u_0,u_{n}}$ is,
\begin{align}
\label{eg_rate_prep_mes}
r_{u_0,u_{n}} & = (\pswap)^{n-1}\min\{C_0, \ldots ,C_{n-1}\}.
\end{align}
\end{lemma}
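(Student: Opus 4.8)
The plan is to analyze the prepare and swap protocol as a discrete-time process and compute the expected number of end-to-end EPR-pairs produced per unit time. First I would fix the time unit so that in each round every elementary link $(u_i,u_{i+1})$ produces $C_i$ fresh elementary pairs (the deterministic elementary-generation assumption makes this clean), and I would restrict attention to a single ``batch'' of simultaneously generated elementary pairs across the whole chain. Since the intermediate repeaters perform their BSMs as soon as both adjacent pairs are available — and all elementary pairs in a batch are available at the same time — the swaps along the chain can be thought of as happening together, independently, each succeeding with probability $\pswap$. An end-to-end pair between $u_0$ and $u_n$ is obtained from a given ``chain'' of elementary pairs precisely when all $n-1$ intermediate swaps succeed, an event of probability $(\pswap)^{n-1}$ by independence.

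Next I would handle the bookkeeping of how many such chains can be formed. The bottleneck link has capacity $\min\{C_0,\dots,C_{n-1}\}$, so in one round at most $\min\{C_0,\dots,C_{n-1}\}$ disjoint chains of elementary pairs can be assembled spanning the whole repeater line; I would argue (greedily pairing up elementary pairs link by link) that exactly $\min\{C_0,\dots,C_{n-1}\}$ chains can be formed, with surplus pairs on higher-capacity links simply discarded or, more carefully, contributing nothing to the end-to-end count in expectation beyond what the bottleneck allows. By linearity of expectation over these $\min\{C_0,\dots,C_{n-1}\}$ independent chains, the expected number of successful end-to-end pairs per round is $(\pswap)^{n-1}\min\{C_0,\dots,C_{n-1}\}$, which is the claimed rate $r_{u_0,u_n}$.

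The main obstacle, and the step I would be most careful about, is justifying that the expected rate is exactly the product rather than merely bounded above by it — i.e. that discarding surplus elementary pairs on the non-bottleneck links costs nothing and that one truly achieves $\min\{C_0,\dots,C_{n-1}\}$ usable chains each round. This requires the observation that an end-to-end pair consumes exactly one elementary pair on every link, so the number of completed end-to-end attempts per round is capped by the bottleneck, and conversely the bottleneck number of attempts can always be scheduled; combined with the per-attempt success probability $(\pswap)^{n-1}$ and linearity of expectation, equality follows. A secondary subtlety is the role of finite memory/storage time: since in the prepare and swap protocol each swap fires independently and immediately once its two inputs are ready, no pair ever needs to wait for a far-away swap, so the storage time does not enter the rate expression — I would note this explicitly, as it is the reason the protocol is ``practical,'' but it does not affect the computation. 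Finally I would remark that this matches the known analysis of \cite{SSMS14}, and that it is consistent with equation \ref{eq:rate} and hence with the LP-formulations, closing the argument.
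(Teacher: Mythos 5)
Your proof is correct and follows essentially the same route as the paper's: deterministic elementary-pair generation yields $C_i$ pairs per link per unit time, the bottleneck caps the number of end-to-end chains at $\min\{C_0,\ldots,C_{n-1}\}$, each chain survives its $n-1$ independent swaps with probability $(\pswap)^{n-1}$, and linearity of expectation gives the claimed rate. If anything, your accounting is slightly more careful than the paper's own proof, which at one point states the all-swaps-succeed probability as $(\pswap)^{n}$ before nevertheless concluding with the correct $(\pswap)^{n-1}$.
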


Notice that the EPR-pair generation rate for this protocol is exactly same as the EPR-pair generation rate, proposed in equation \ref{eq:rate}, which we use for the path-based formulation. In the previous sections we show that both of the path-based and the edge-based formulations are equivalent. This implies, the rates extracted from algorithm \ref{flow_dec3} can be achieved with this prepare and swap protocol. 

\subsubsection{Prepare and swap protocol for an arbitrary network}

In a quantum network if there are multiple demands then one link might be shared between multiple paths. From algorithm \ref{flow_dec3} we get the set of paths and the desired EPR-pair generation rate $r_p$, across each of the paths $p$, passing through an edge $(u,v)$. In this scenario we use the prepare and swap protocol for each of the paths in a sequential manner or round-robin manner\footnote{One can use a more sophisticated scheduling algorithm for allocating the EPR-pair generation resources across an elementary link. The details of this scheduling algorithm are beyond the scope of this paper.}. From lemma \ref{egr_prep_swap} we get that, to achieve the rate $r_p$, we need to generate on average $\frac{r_p}{(\pswap)^{|p|-1}}$ elementary EPR-pairs per second across each of the elementary links $(u,v)$  along the path $p$. Here we also assume that the elementary link generation is a deterministic event, i.e., the two nodes $u$ and $v$ connecting the elementary link $(u,v)$ can generate exactly $C(u,v)$ EPR-pairs per second. Hence, each of the paths uses the elementary link $(u,v)$ for $\frac{r_p}{(\pswap)^{|p|-1}C(u,v)}$ seconds (on average) for generating the required elementary EPR-pairs\footnote{Note that, if $\frac{r_p}{(\pswap)^{|p|-1}}$ is a rational number then one can always achieve this average rate. For an irrational value of $\frac{r_p}{(\pswap)^{|p|-1}}$ we need to approximate it to a rational number.}. For generating $r_p$ EPR-pairs per second, all the nodes along the path $p$ need to use the elementary links for at most $\max_{(u,v)\in p}\left\{\frac{r_p}{(\pswap)^{|p|-1}C(u,v)}\right\}$ seconds, which is equal to $\frac{r_p}{(\pswap)^{|p|-1}}\frac{1}{\min_{(u,v)\in p}\{C(u,v)\}}$ seconds. This gives an upper bound on the average storage time of the quantum memory for generating the EPR-pairs along the path $p$. In figure \ref{exam1} we give an example of allocating EPR-pair generation resources for multiple demands.


\begin{figure}
    \includegraphics[width=0.7\textwidth, left]{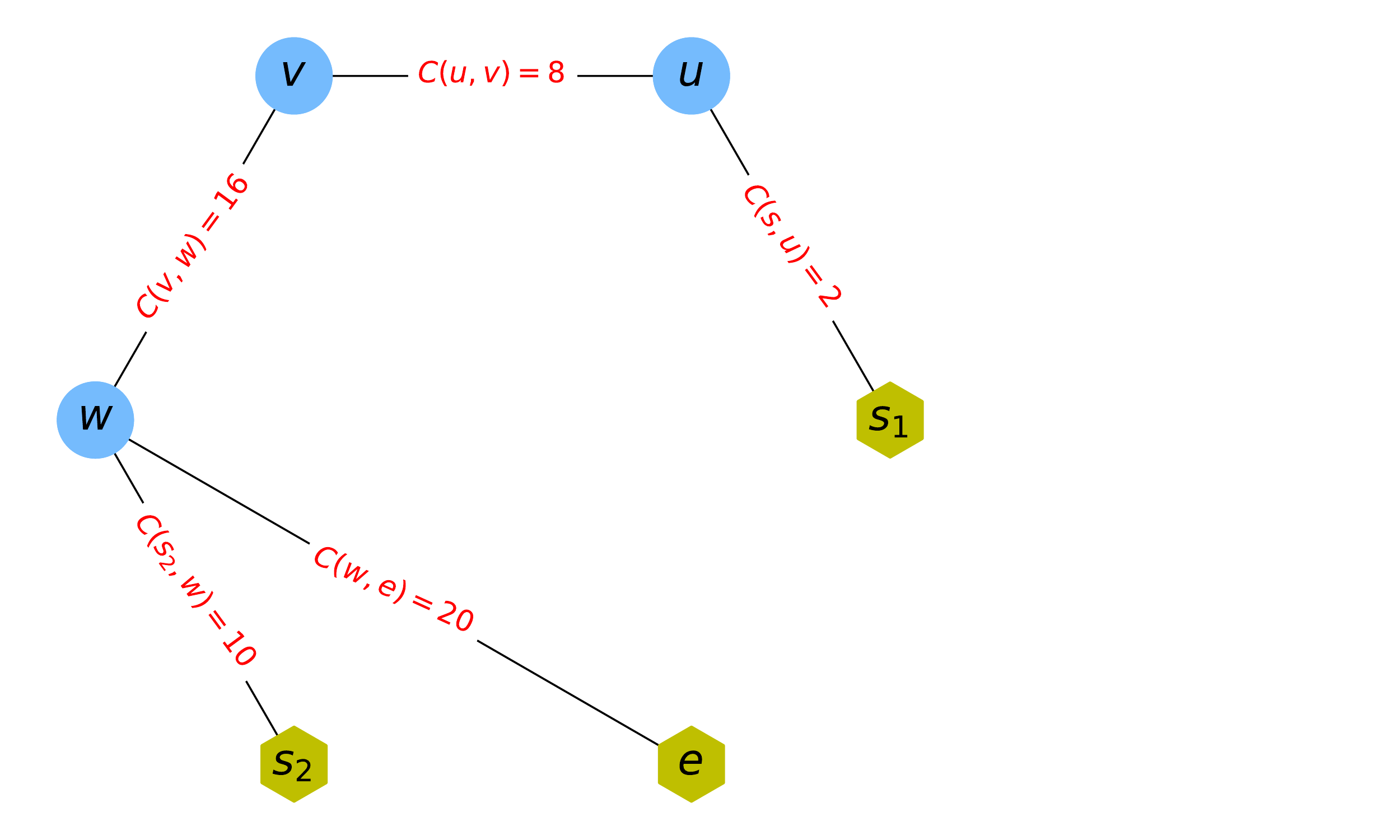}
    \caption{Repeater network $G = (V,E,C)$ with the demand set $D = \{(s_1,e),(s_2,e)\}$. All the repeaters use BSM for the entanglement swap. Here we assume that the success probability of the BSM is $q = \frac{1}{2}$. From the algorithm \ref{flow_dec3} we get a path $p_1 = ((s_1,u),(u,v),(v,w),(w,e))$ for the demand $(s_1,e)$ and a path $p_2 = ((s_2,w),(w,e))$ for the demand $(s_2,e)$. The desired rate across $p_1$ is $r_{p_1} = 0.25$ EPR-pairs per second and the desired rate across $p_2$ is $r_{p_2} = 5$ EPR-pairs per second. 
If we use the prepare and swap protocol along both of the paths separately then we get the desired EPR-pairs generation rate. However, here the elementary link $(w,e)$ is being shared by both of the paths. In this case, a simple scheduling technique would be to distribute the EPR-pairs sequentially. For example, at the beginning the elementary link $(w,e)$ generates the EPR-pairs for the demand $(s_1,e)$. For generating on average $0.25$ EPR-pairs across the path $p_1$, the elementary link $(w,e)$ has to generate $0.25 \times 2^{|p_1|-1} = 2$ EPR-pairs per second. As the capacity of the elementary link is $20$ EPR-pairs per second. Hence, it can generate $2$ EPR-pairs within $\frac{2}{20} = 0.1$ seconds (on average). Then it can start generating the EPR-pairs for the demand $(s_2,e)$. For this demand, the elementary link $(w,e)$ has to generate $5 \times 2^{|p_2|-1} = 10$ EPR-pairs per second. Hence, it will take on average $\frac{10}{20} = 0.5$ seconds to generate $10$ EPR-pairs.}
    \label{exam1}
\end{figure}

\subsection{Complexity analysis}
\label{complex_seq_swap_wo_fid}

In this section, we analyse the complexity of the LP formulations as well as the path-extraction and rate allocation algorithm (algorithm \ref{flow_dec3}). The edge-based LP-formulation, proposed in this paper is based on the modified network graph $G'$ and modified demand set $\dmod$ and the running time of the edge-based LP-formulation solver depends on the size of this network and modified demand set. In the next lemma we give an upper bound on the size of $G'$ and $\dmod$.

\begin{lemma}
\label{lem:graph_size}
The edge-based formulation, proposed in equations \ref{edge_form_plen3} to \ref{const_edge33}, has at most $N = |D||E||V|$ variables and $M = |V|^2|E||D| + |V||E| + |V|^2|D|$ constraints, where $|V|, |E|, |D|$ denote the total number of repeater nodes, total number of edges in the network graph $G$ and size of the demand set $D$ respectively.
\end{lemma}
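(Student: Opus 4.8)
The plan is to count variables and constraints directly from the edge-based formulation in Table~\ref{EBF}, using the size bounds on the expanded graph $G'$ and the modified demand set $\dmod$. First I would record the basic sizes of $G'$: since $G'$ contains $\lmax+1$ copies of each node, $|V'| = (\lmax+1)|V|$, and since each edge $(u,v)\in E$ is replicated at most $\lmax$ times (once for each pair of consecutive levels), $|E'| \le \lmax|E|$. The key observation is that $\lmax = \max\{l_1,\dots,l_k\} \le |V|$, because any path whose length exceeds $|V|$ must repeat a vertex and hence is not a simple path — so a length constraint larger than $|V|$ is never binding and can be clipped to $|V|$. Thus $|V'| = O(|V|^2)$ and $|E'| = O(|V||E|)$.

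Next I would count the variables. The formulation has one function $g_{ij}$ per pair $(i,j)$ with $1\le i\le k$ and $1\le j\le l_i$, and each $g_{ij}$ assigns one real value $g_{ij}(u',v')$ to each edge $(u',v')\in E'$. The number of pairs $(i,j)$ is $\sum_{i=1}^k l_i \le k\lmax \le |D||V|$. Multiplying by $|E'| \le \lmax|E| \le |V||E|$ — wait, I should be careful: a cleaner bound is that each $g_{ij}$ lives on edges of $G'$, but in fact for demand $(s^0_i, e^j_i)$ only edges between consecutive levels up to level $j$ matter, so at most $|E|$ edges of the original graph times the number of relevant levels. The crude bound that suffices is: number of $(i,j)$ pairs times $|E'|$, i.e. $(|D||V|)\cdot(|V||E|)$, but the lemma claims $N = |D||E||V|$, so I need the tighter accounting that $g_{ij}$ effectively has $O(|E|)$ meaningful variables — the edges used by level-$j$ paths — times $j \le |V|$ copies is already absorbed, or more simply one counts $\sum_i l_i \cdot |E| \le |D||V|\cdot|E|$ by noting each $g_{ij}$ need only be defined on the $\le |E|$ edges at each of its $\le l_i$ levels but summed appropriately; I will present the bookkeeping so the product comes out to $|D||E||V|$. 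The main obstacle is precisely getting the constants and the level-counting to collapse to the stated $N$ rather than a gratuitously larger polynomial; the honest route is to observe that the total number of edge-variables is $\sum_{i=1}^k \sum_{j=1}^{l_i} |\{\text{edges of }G'\text{ on }s^0_i\text{-}e^j_i\text{ paths}\}| \le \sum_{i=1}^k \sum_{j=1}^{l_i} j|E| \le \sum_{i=1}^k l_i^2 |E|/2$, and then bound $l_i \le |V|$ to get $O(|D||V|^2|E|)$ — so if the lemma truly claims $|D||E||V|$ I would either reconcile with the paper's convention (perhaps they fix $g_{ij}$ on all of $E'$ and count $|E'| = O(|V||E|)$ directly, giving $\sum_i 1 \cdot |E'|$ with the $j$-sum already inside $|E'|$), or note the bound up to the stated polynomial order.

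Then I would count constraints by inspecting the three constraint families in Table~\ref{EBF}. The nonnegativity constraints~\eqref{const_edge31} are indexed by $(i,j,t,(u,v))$ with $1\le i\le k$, $1\le j\le l_i$, $0\le t\le \lmax-1$, $(u,v)\in E$; their number is $\big(\sum_i l_i\big)\cdot\lmax\cdot|E| \le |D||V|\cdot|V|\cdot|E| = |V|^2|E||D|$, matching the first term of $M$. The capacity constraints~\eqref{const_edge32} are indexed by $(u,v)\in E$ only — one per original edge — but reading the paper's statement they allow one per $(u^t,v^{t+1})$, giving $\le \lmax|E| \le |V||E|$, matching the second term $|V||E|$. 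The flow-conservation constraints~\eqref{const_edge33} are indexed by $(i,j,v')$ with $v' \in V'$ (minus the two special nodes), i.e. $\big(\sum_i l_i\big)\cdot|V'| \le |D||V|\cdot(\lmax+1)|V| = O(|V|^3|D|)$; again the paper's claimed term is $|V|^2|D|$, which is what one gets if one counts $\sum_i l_i \le |D||V|$ times $|V|$ copies-per-node-class but only $|V|$ original nodes, i.e. treating the conservation constraint as indexed by original node $v$ and level implicitly — so $\big(\sum_i l_i\big)\cdot|V| \le |D||V|\cdot|V| = |V|^2|D|$.

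Finally I would assemble: $N \le |D||E||V|$ and $M \le |V|^2|E||D| + |V||E| + |V|^2|D|$ by summing the three constraint-family bounds. The one genuinely delicate point I flagged above is the level-counting for $\lmax$: I would state and use the lemma that $l_i \le |V|$ for every $i$ (else no simple path realizes the demand), since without it the expanded graph could be arbitrarily large and the polynomial bound would fail. Everything else is a direct tally against the indices appearing in the $\sum$ and $\forall$ quantifiers of the LP, and I would present it as such, being explicit about which index ranges contribute which factors so the reader can check the arithmetic matches the stated $N$ and $M$.
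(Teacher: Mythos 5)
Your proposal is correct and takes essentially the same route as the paper: bound $|V'|\leq \lmax|V|$ and $|E'|\leq\lmax|E|$, use $\lmax\leq|V|$, and then tally the index ranges of each variable and constraint family. The mismatches you agonise over are real inconsistencies in the paper rather than gaps in your argument: the paper's own proof counts the nonnegativity constraints \ref{const_edge31} (one per variable) as $|V|^2|E||D|$ while simultaneously asserting only $N=|D||E||V|$ variables, and it derives $|V|^3|D|$ flow-conservation constraints \ref{const_edge33} before the lemma statement records $|V|^2|D|$; your honest counts $\bigl(\sum_i l_i\bigr)|E'|\leq|D||V|^2|E|$ and $\bigl(\sum_i l_i\bigr)|V'|\leq|D||V|^3$ are the defensible ones, and since only polynomial scaling in $|V|$, $|E|$, $|D|$ is needed downstream, nothing else in the paper is affected. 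One small point worth keeping from your write-up that the paper omits: an explicit justification that $l_i\leq|V|$ (via clipping, since no simple path exceeds $|V|$ hops), without which the expanded graph need not be polynomially bounded.
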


\begin{proof}
The edge-based formulation, is based on the modified network $G' = (V',E',C')$ which we construct from the actual internet network $G = (V,E,C)$. In $G'$ we create at most $l_{\text{max}}$ copies of each of the nodes and edges. This implies, $|V'| \leq \lmax |V|$ and $|E'| \leq \lmax|E|$. As, $\lmax = \max \{l_1, \ldots, l_{|D|}\}$, hence $l_{\max} \leq |V|$. This implies, $|V'| \leq |V|^2$ and $|E'| \leq |E||V|$. In the construction of the edge-based formulation, for each demand, we introduce one variable corresponding to each edge of $E'$. Hence, the total number of variables for this formulation is $N = |D||E'| = |D||E||V|$.

In the edge-based formulation, the constraint equation \ref{const_edge31} ($g_{ij}(u,v) \geq 0$) holds for all $1\leq i \leq |D|$, for all $1 \leq j \leq l_i$ and for all edge $(u,v) \in E'$. This implies, the total number of constraints corresponding to equation \ref{const_edge31} is $|V|^2|E||D|$. Similarly, the constraint equation \ref{const_edge32} holds for all the edges $|E'|$. This implies, there are at most $|V||E|$ constraints corresponding to equation \ref{const_edge32}. The constraint equation \ref{const_edge33} should be satisfied by all the nodes in $V'$ and for all $1\leq i \leq |D|$, for all $1 \leq j \leq l_i$. This implies, the total number of constraints corresponding to that equation is $|V|^3|D|$. Hence, the total number of constraints in the edge-based formulation is $M = |V|^2|E||D| + |V||E| + |V|^3|D|$.

\end{proof}

The previous lemma implies that the total number of variables and constraints in the LP-formulation corresponding to the maximum flow problem scales polynomially with the number of nodes in the network graph $G$. This implies, the time complexity of the LP solvers for this problem also scales polynomially with the number of nodes in the network graph $G = (V,E,C)$. Hence, one can compute the maximum achievable rate for the quantum internet in polynomial time. Now we focus on the complexity of algorithm \ref{flow_dec3}. The algorithm \ref{flow_dec3} uses the solution of the edge-based formulation for extracting the set of paths for each of the demands. In the next proposition, we show that the size of the set of extracted paths corresponding to each demand is upper bounded by $|V||E|$. Then we use this result for computing the running time of algorithm \ref{flow_dec3}. 

\begin{proposition}
\label{prop_prep_cond2}
In algorithm \ref{flow_dec3},
\begin{equation}
|\p_{i,j}| \leq |E||V|.
\end{equation}
\end{proposition}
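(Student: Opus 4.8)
The plan is to bound the number of iterations of the \textbf{while} loop in algorithm \ref{flow_dec3} for a fixed pair $(i,j)$, since $\p_{i,j}$ gains exactly one new path per iteration. The key observation is that each iteration of the loop drives at least one edge variable $F_{i,j,m}(u',v')$ to zero: in iteration $m$ we pick a path $p_{j,m}$ on which every edge has positive residual value, set $\ra_{p_{j,m}}/(\pswap)^{j-1} = \min_{(u',v')\in p_{j,m}} F_{i,j,m}(u',v')$, and then subtract this quantity from $F_{i,j,m}$ on every edge of $p_{j,m}$ (equations \ref{prep_swap_rate}, \ref{prep_swap_update}). The edge achieving the minimum therefore has residual value exactly $0$ after the update. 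So I would first argue that the residual value of any edge, once it reaches $0$, stays $0$: the updates only ever subtract nonnegative amounts, and an edge with zero residual value cannot lie on a subsequently chosen path $p_{j,m'}$ (the path-selection step in lines 7--8 requires $F_{i,j,m'}(u',v')>0$ on all its edges), so it is never decremented again. Hence the number of edges with positive residual value strictly decreases by at least one per iteration.

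Next I would count the relevant edges. The iteration operates on the expanded graph $G'=(V',E',C')$, and by the bound established in the proof of Lemma \ref{lem:graph_size} we have $|E'| \le \lmax |E| \le |V||E|$ (using $\lmax \le |V|$). Since the loop starts with at most $|E'|$ edges of positive residual value and loses at least one per iteration, it terminates after at most $|E'| \le |V||E|$ iterations, and thus $|\p_{i,j}| \le |V||E|$.

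One subtlety I would want to address carefully is whether a path $p_{j,m}$ with all-positive residual edge values is guaranteed to exist whenever the loop condition $\sum_{v:(s^0_i,v^1)\in E'} F_{i,j,m}(s^0_i,v^1) > 0$ holds: this is where flow conservation (equation \ref{const_edge33}, which is preserved by the uniform subtraction along a path in equation \ref{prep_swap_update}) is used — positive outflow from $s^0_i$ in a conservative flow on $G'$ implies the existence of an $s^0_i$-to-$e^j_i$ path carrying positive flow on every edge. This guarantees the loop is well-defined, but it is not strictly needed for the counting bound itself; the counting argument only needs that \emph{each} executed iteration kills an edge. The main obstacle is really just making the "zeroed edges stay zeroed and are never revisited" claim airtight, i.e., checking that the residual functions $F_{i,j,m}$ remain nonnegative throughout (which follows because we subtract exactly the minimum) and that line 8's positivity requirement genuinely excludes all previously saturated edges. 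Once that monotonicity is pinned down, the bound $|\p_{i,j}| \le |E'| \le |V||E|$ is immediate.
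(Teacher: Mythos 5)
Your proposal is correct and follows essentially the same route as the paper's own proof: each iteration of the while loop zeroes out at least one edge of $G'$ (the edge attaining the minimum residual), zeroed edges are never revisited because the path-selection step requires strictly positive residuals, and hence the loop runs at most $|E'| \le \lmax |E| \le |V||E|$ times. Your treatment is in fact slightly more careful than the paper's, since you explicitly justify that saturated edges stay saturated and that flow conservation guarantees a valid path exists whenever the loop condition holds.
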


\begin{proof}
Due to the flow conservation property of the edge-based formulation, if for some neighbour of $s^0_i$, $F_{i,j,m}(s^0_i,v^1) > 0$ then there exist a path $p_{j,m}$ from $s^0_i$ to $e^j_i$ such that $F_{i,j,m}(u',v') >0$ for all $(u',v') \in p_{j,m}$. Note that, at each step $m$ of the algorithm \ref{flow_dec3} there exist at least one edge $(u',v') \in E'$ in the discovered path $p_{j,m}$, such that $F_{i,j,m+1}(u',v') =0$. As there are in total, $|E'|$ number of edges and the algorithm runs until $\sum_{v^1:(s^0_i,v^1) \in E'}F_{i,j,m+1}(s^0_i,v^1)=0$, so the maximum value of $m$ could not be larger than $|E'|$. From the construction of the modified network, $G'$ we have $|E'| \leq |V||E|$. This implies, $|\p_{i,j}| \leq |E'|\leq |E||V|$.
\end{proof}

In the next theorem we show that, running time of algorithm \ref{flow_dec3} is $O(|D||V|^4|E|)$.
  
\begin{theorem}
\label{thm_comp_rate_fid_prep}
The algorithm \ref{flow_dec3} takes the solution of the edge-based LP-formulation and extract the set of paths in $O(|D||V|^4|E|)$ time, where $|D|$ is the size of the demand set, $|V|, |E|$ denote the total number of nodes and edges in the network $G = (V,E,C)$.
\end{theorem}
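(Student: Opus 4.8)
The plan is to bound the running time by multiplying the number of iterations of the three nested loops of Algorithm~\ref{flow_dec3} by the cost of one pass through the innermost (\textbf{while}) loop. The outer \textbf{for} loop over $i$ runs $k=|D|$ times; for fixed $i$ the inner \textbf{for} loop over $j$ runs $l_i$ times, and since $l_i\le\lmax\le|V|$ (as noted in the proof of Lemma~\ref{lem:graph_size}) the two \textbf{for} loops together enter their body at most $|D|\,|V|$ times. For a fixed pair $(i,j)$, each iteration of the \textbf{while} loop appends exactly one path to $\p_{i,j}$, so by Proposition~\ref{prop_prep_cond2} it runs at most $|\p_{i,j}|\le|E'|\le|V|\,|E|$ times. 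Hence the total number of \textbf{while}-loop iterations over the whole run is at most $|D|\,|V|\cdot|V|\,|E|=|D|\,|V|^2|E|$, and the theorem will follow once I show that a single such iteration can be carried out in $O(|V|^2)$ time.

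One iteration of the \textbf{while} loop does three things: it finds a path $p_{j,m}$ from $s^0_i$ to $e^j_i$ along which $F_{i,j,m}$ is strictly positive; it computes $\ra_{p_{j,m}}$ as the minimum of $F_{i,j,m}$ over that path; and it decrements $F_{i,j,m}$ along the path. The second and third tasks touch only the $|p_{j,m}|=j\le|V|$ edges of the path and cost $O(|V|)$, so the real work is the path search. Here I would exploit the layered structure of $G'$: a path can be found by a single forward walk that starts at $s^0_i$ and, at the current copy $u^t$, moves along any outgoing edge carrying positive $F_{i,j,m}$. The point that must be checked is that this walk never gets stuck before it reaches $e^j_i$, and this is where flow conservation enters. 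Since $G'$ is an acyclic layered graph, $F_{i,j,m}$ is a nonnegative flow from the single source $s^0_i$ to the single sink $e^j_i$ supported only on the edges between layers $0,\dots,j$, with all flow that reaches layer $j$ entering $e^j_i$; consequently, by constraint~\ref{const_edge33}, any node other than $e^j_i$ that the walk reaches has positive inflow and hence positive outflow, so the walk advances exactly one layer per step and arrives at $e^j_i$ after $j\le|V|$ steps. Scanning the outgoing edges of the current copy at each step costs at most $\deg_G(u)\le|V|$, so the search is $O(|V|^2)$, which dominates the other two tasks.

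Combining the two bounds gives a running time of $O\big(|D|\,|V|^2|E|\cdot|V|^2\big)=O(|D|\,|V|^4|E|)$; the $O(|E'|)=O(|V|\,|E|)$ cost of the reinitialisations $F_{i,j}=g_{ij}$ adds only $O(|D|\,|V|^2|E|)$ and is absorbed. The one genuinely non-routine step is the $O(|V|^2)$ bound on the path search: it relies on using flow conservation together with the layered construction of $G'$ to rule out dead ends, so that one does not have to fall back on a generic graph search, which would cost $O(|V'|+|E'|)=O(|V|^2+|V|\,|E|)$ per iteration and yield only the weaker bound $O(|D|\,|V|^3|E|^2)$. Verifying carefully that the forward walk cannot fail --- in particular that $F_{i,j,m}$ remains a genuine $s^0_i$-$e^j_i$ flow after each update --- will be the crux of the proof.
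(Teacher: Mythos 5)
Your proposal is correct and follows essentially the same counting as the paper's proof: $|D|\cdot|V|$ choices of $(i,j)$, at most $|V||E|$ while-iterations each by Proposition~\ref{prop_prep_cond2}, and $O(|V|^2)$ per iteration, giving $O(|D||V|^4|E|)$. The only difference is that you carefully justify the $O(|V|^2)$ per-path cost via the layered structure of $G'$ and flow conservation, whereas the paper simply asserts that finding a path takes $O(|V'|)=O(|V|^2)$ time; your argument is the more rigorous version of the same step.
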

\begin{proof}
In algorithm \ref{flow_dec3} we compute the paths based on the modified network $G' = (V',E',C')$, which we construct from the original network $G = (V,E,C)$. In this modified network $|V'| \leq |V|l_{\max} \leq |V|^2$.
In algorithm \ref{flow_dec3} at step $11$ we compute a path in the graph $G'$. Note that, in the worst case, it takes $O(|V'|) = O(|V|^2)$ time to find a path between a source-destination pair in a network. According to proposition \ref{prop_prep_cond2}, we have that for a fixed $1\leq i \leq |D|$ and a fixed $1 \leq j \leq l_i$ the total number of paths discovered by algorithm \ref{flow_dec3} is upper bounded by $O(|V||E|)$. Hence, for that $i,j$, the running time of algorithm \ref{flow_dec3} is $O(|V|^3|E|)$. As, $i \leq |D|$ and $j \leq l_i \leq l_{\max} \leq |V|$, so in the worst case scenario, the total running time of algorithm \ref{flow_dec3} is upper bounded by $O(|D||V|^4|E|)$. This concludes the proof.
\end{proof}

\section{Conclusion}
\label{concl}
In this paper, we use techniques from the length constrained multi-commodity flow theory for developing a polynomial-time algorithm for maximising achievable expected entanglement generation rate between multiple source-destination pairs in a quantum internet. Here, we have maximised the end-to-end entanglement distribution rate, satisfying a constraint on the end-to-end fidelity for each of the demand. We have shown that our LP-formulation provides a maximal solution if it exists. Our path extraction algorithm produces a set of paths and the achievable rates along each of the paths. The path-extraction algorithm has high running time as a function of the path-length. Here we consider the worst case scenarios, where we assume that the length of the discovered path scales with $|V|$. In practical scenarios, without distillation, the end-to-end fidelity of the distributed EPR-pairs would drop drastically with the path-length. Hence, it is fair to consider that the length of the allowed path increases slowly with the size of the network node set. This would make the path-extraction algorithm faster. 

One can use any entanglement generation protocol for distributing EPR-pairs across the paths that are discovered by the path-extraction algorithm. However, our LP-formulation is inspired from the atomic ensemble and linear optics based quantum repeaters, where the storage time is very short and the entanglement swap operation is probabilistic in nature \cite{SSDG11, GMLC13, SSMS14}. Here, we have also pointed out that, there exists a practical protocol, called prepare and swap protocol, which can be implemented using atomic ensemble based repeaters and if one uses this protocol for distributing entanglement across each of the paths, then one can generate EPR-pairs with the rate proposed by our path-extraction algorithm.

In this paper, we focus on maximising the end-to-end entanglement generation rate. However, one can easily extend our results for other objective functions, like minimising the weighted sum of congestion at edges. 

In future work, it would be interesting to include the more realistic parameters like the bounded storage capacity, time to perform the swap operation, etc., in our model and modify our current formulations to come up with more sophisticated routing algorithms. 

The proposed LP-formulations give an optimal achievable EPR-pairs distribution rate with respect to prepare and swap protocol. This protocol is practical and require very less amount of quantum storage time. However, there exist more sophisticated protocol which can achieve higher EPR-pairs distribution rate but require higher quantum storage time. Another interesting future research direction would be to find out a protocol for distributing EPR-pairs along a chain which achieves the optimal EPR-pair generation rate and find an LP-formulation for such protocol.

\begin{acknowledgements}
We would like to acknowledge W. Kozlowski for many stimulating discussions. We would like to thank M. Skrzypczyk for giving useful feedback on the draft. This publication is supported by an ERC Starting grant and the QIA-project that has received funding from the European Union's Horizon 2020 research and innovation program under grant Agreement No. 820445.
\end{acknowledgements}
\bibliographystyle{apsrev4-1}

\bibliography{mybibnet,revtex-custom}

\appendix
\section{Outline}
In the first part of the appendix, we focus on giving a detailed proof of the equivalence of the path-based and the edge-based formulation. In the second part of the appendix, we show how the prepare and swap protocol can achieve the entanglement distribution rate, which we get as an output from the LP solver. Before going to the detailed proof, in the next appendix first, we define again some of the notations which we use in the proof. One can find the equivalence of the edge-based formulation and the path-based formulation in appendix \ref{app_equiv_prep}. More precisely, for the clarity, in appendices \ref{app_path_form} and \ref{app_edge_form} we rewrite the path-based formulation and the edge-based formulations. We prove the equivalence between both of the formulations by showing that one can construct the solution of the edge-based formulation from the path-based formulation (see appendix \ref{app_pte_prep_swap}) and vice-versa (see appendix \ref{app_etp_prep_swap}). In appendix \ref{app_prep_swap}, we describe the entanglement distribution rate for the prepare and swap protocol across a path.


\section{Notations}
\label{not}

In this section, we define again some of the notations we are going to use later in the proofs. We first start with the network graph $G = (V, E, C)$, which is a directed graph, and it abstracts the quantum network. Here $V$ denotes the set of quantum repeaters, $E$ denotes the set of quantum communication links and $C: E \rightarrow \R^+$ denotes the entanglement generation capacity of an edge $(u,v) \in E$, i.e., how many EPR-pairs the nodes $u,v$ can generate per second. A path $p$ between a source node and a destination node in the graph is a finite sequence of edges which joins a sequence of distinct vertices. The path-length of a path $p$ between a source-destination pair $(s,e)$ is denoted by $|p|$. 
Next, we define the set of demands $D = \{(s_1,e_1,l_1), \ldots , (s_k,e_k,l_k)\}$, where the $i$-th element of this set (or $i$-th demand) is a triplet $(s_i,e_i,l_i)$ and $s_i$ would like to share EPR-pairs with $e_i$ using the multiple paths, whose path-lengths are at most $l_i$. Here, we assume the size of the demand set $|D| =k$.


\section{Equivalence of the path-based and edge-based formulation for the prepare and swap protocol}
\label{app_equiv_prep}

In this section, we prove the equivalence between the path-based formulation and the edge-based formulation. Before the proof, for clarity, in the next two subsections, we rewrite the path-based formulation and the edge-based formulation. Note that for the edge-based formulation of the LP construction, we use the modified network graph $G'= (V',E',C')$, which is constructed from $G = (V,E,C)$. For the clarity here again we rewrite the construction of $G'$. First, from the demand set, $D$ we compute $\lmax = \max\{l_1, \ldots , l_k\}$, where each of the $l_i$ is related to the length constraint of the $i$-th demand $(s_i,e_i,l_i)$. Then, for each node $u \in V$, we create $\lmax + 1$ copies of $u$. We denote them as $u^0,u^1, \ldots , u^{\lmax}$. We denote $V^j$ as the set of the $j$-th copy of all the nodes in $V$, i.e., $V^j := \{u^j: u\in V\}$. For $G'$, the set of nodes $V' = \bigcup_{j=0}^{\lmax}V^j$. For each edge $(u,v) \in E$ and for each $0\leq j < \lmax$, we define, $(u^j , v^{j+1}) \in E'$. For each edge $(u^j,v^{j+1}) \in E'$ we define $C'(u^j,v^{j+1}) = C(u,v)$. 

Note that, by construction, the path-length of all the paths from $s^0_i$ to $e^j_i$ is exactly $j$. For the $i$-th demand we are interested in finding the paths between $s_i$ and $e_i$ with path-length at most $l_i$. Hence, finding paths for the $i$-th demand in $G$ is same as finding paths from $s^0_i$ to $e^1_i, \ldots , e^{l_i}_i$ in the modified network $G'$. For this reason, in the edge-based formulation, we decompose the $i$-th demand $(s_i,e_i,l_i)$ into $l_i$ demands $\{(s^0_i,e^1_i), \ldots , (s^0_i,e^{l_i}_i)\}$ and construct a new demand set called, $\dmod$. It is defined as follows,

\begin{equation}
\label{app_mod_dem}
\dmod := \{D_1, \ldots , D_{k}\},
\end{equation}

where for each $1 \leq i \leq k$, $D_i = \{(s^0_i,e^1_i), \ldots , (s^0_i,e^{l_i}_i)\}$. 
\subsection{Path-based formulation}
\label{app_path_form}

In this section, we rewrite again the path-based formulation based on the new demand set $\dmod$ and the modified network $G'$. Here, for the $i,j$-th demand $(s^0_i,e^j_i) \in \dmod$ we denote $\p_{i,j}$ as the set of all possible paths from $s^0_i$ to $e^j_i$ and for each path $p \in \p_{i,j}$ we define one variable $r_p \in \mathbb{R}^+$. The variable $r_p$ denotes the flow between $s^0_i$ to $e^j_i$ via the path $p$. The aim of the path-based formulation is to maximise the sum $\sum_{i=1}^k \sum_{j=1}^{l_i} \sum_{p \in \p_{i,j}} r_p$.

We give the exact formulation in table \ref{app_path_form}.

\begin{table}[ht]
\begin{mdframed}
\begin{alignat}{3}
\nonumber
&\text{Maximize} \quad \quad \quad \quad \quad \quad \quad \quad\\ \label{app_opt_cr_swap_len1}
&\sum_{i=1}^k \sum_{j=1}^{l_i}\sum_{p \in \p_{i,j}}r_p \quad \quad \quad \quad \quad \quad \\ \nonumber
&\text{Subject to :}\quad \quad \quad \quad \quad \quad \quad \quad \\ \label{app_swp_const1}
&\sum_{i=1}^k \sum_{j=1}^{l_i} \sum_{t=0}^{\lmax -1}\hspace{-0.13in}\sum_{\substack{p \in \p_{i,j} :\\ (u^t,v^{t+1}) \in p}}\hspace{-0.13in}\frac{r_p}{(\pswap)^{|p|-1}} \leq C(u,v), ~ &\forall ~(u,v) \in E,\\\ \nonumber
&\forall i \in \{1, \ldots ,k\}, \forall  j \in \{1, \ldots, l_i\}, \forall ~p\in \p_{i,j}, &\\ \label{app_swp_const2}
&r_p \geq 0.\quad \quad \quad \quad \quad \quad \quad \quad&
\end{alignat}
\vspace{0.1in}
\end{mdframed}
\caption{Path-Based Formulation on the Modified Network}
\label{app_path_form}
\end{table}

\subsection{Edge-based formulation}
\label{app_edge_form}

In this section, for clarity, we rewrite the edge formulation. Here, for the $i,j$-th demand $(s^0_i,e^j_i) \in D_i$ (where $D_i \in \dmod$), we define one function $g_{ij}: E' \rightarrow \R^+$. We give the edge-formulation in table \ref{app_edge_form}.

\begin{table}[ht]
\begin{mdframed}
\begin{alignat}{3}
\nonumber
&\text{Maximize}\\ \label{app_edge_form_plen3}
&\sum_{i=1}^k\sum_{j=1}^{l_i} (\pswap)^{j-1}\sum_{\substack{v^1:(s^0_i,v^1) \in E'}}  g_{ij}(s^0_i,v^1). \\ \nonumber
&\text{Subject to :}\\ \nonumber 
&\text{For all }  1 \leq i \leq k,  1 \leq j \leq  l_{i},  0 \leq t \leq \lmax-1, (u,v) \in E,\\  \label{app_const_edge31}
& g_{ij}(u^t,v^{t+1}) \geq 0. \\ \label{app_const_edge32}
& \sum_{i=1}^k \sum_{j=1}^{l_i} \sum_{t=0}^{\lmax -1}g_{ij}(u^t,v^{t+1}) \leq C(u,v).\\ \nonumber
&\text{For all } 1 \leq i \leq k, 1 \leq j \leq  l_{i},\\ \nonumber
& \text{For all }u',v',w' \in V' : v' \neq s^0_i, v' \neq e^{j}_i , \\ \label{app_const_edge33}
&\sum_{u': (u',v') \in E'} g_{ij}(u',v')= \sum_{w': (v',w') \in E'} g_{ij}(v',w').
\end{alignat}
\vspace{0.1in}
\end{mdframed}
\caption{Edge-based formulation on the modified network}
\label{app_edge_form}
\end{table}

\subsection{From the path-based formulation to the edge-based formulation}
\label{app_pte_prep_swap}

In this section, we show that the solution of the edge-based formulation is at least as good as the solution of the path-based formulation. In order to do so we assume that we have the solution of the path-based formulation proposed in table \ref{app_path_form}. From this solution we construct a solution of the edge-based formulation, proposed in table \ref{app_edge_form}. For the $i,j$-th demand $(s^0,e^j_i) \in \dmod$, for each edge $(u',v') \in E'$, we define a function $\g_{ij}: E \rightarrow \R^+$, as follows,

\begin{equation}
\label{app_eq_gij_prep_swap}
\g_{ij}(u',v') := \sum_{\substack{p \in \p_{i,j}, \\ (u',v') \in p}} \frac{r_p}{(\pswap)^{j-1}}. 
\end{equation}

Here, we show that this $\g_{ij}$ is a valid solution for the edge-based formulation. In order to do so, first we need to show that $\g_{ij}$ corresponds to the objective function of the edge-formulation.  

 \begin{proposition}
\label{obj_pte_prep_swap}
For all $(u',v') \in E'$, if we consider equation \ref{app_eq_gij_prep_swap} as the definition of the function $\g_{ij}:E' \rightarrow \R^+$ then,
\begin{equation}
\sum_{i=1}^k \sum_{j=1}^{l_i}\sum_{p \in \p_{i,j}}r_p = \sum_{i=1}^k\sum_{j=1}^{l_i} (\pswap)^{j-1}\sum_{\substack{v^1:(s^0_i,v^1) \in E'}}  \g_{ij}(s^0_i,v^1).
\end{equation}
\end{proposition}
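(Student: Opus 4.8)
The plan is to show directly that summing the definition \eqref{app_eq_gij_prep_swap} over all edges leaving the source $s^0_i$ recovers $\sum_{p \in \p_{i,j}} r_p$ for each fixed $i,j$, and then sum over $i$ and $j$ with the weights $(\pswap)^{j-1}$. First I would fix $i$ and $j$ and observe that, by the construction of the modified graph $G'$, every path $p \in \p_{i,j}$ starts at $s^0_i$ and, since the vertices of a path are distinct and edges only go from layer $t$ to layer $t+1$, every such path uses exactly one edge of the form $(s^0_i, v^1)$ with $(s^0_i, v^1) \in E'$. Hence the sets $\{p \in \p_{i,j} : (s^0_i,v^1) \in p\}$, as $v^1$ ranges over the out-neighbours of $s^0_i$, partition $\p_{i,j}$.

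Next I would compute
\begin{align*}
\sum_{\substack{v^1:(s^0_i,v^1) \in E'}} \g_{ij}(s^0_i,v^1)
&= \sum_{\substack{v^1:(s^0_i,v^1) \in E'}} \sum_{\substack{p \in \p_{i,j},\\ (s^0_i,v^1) \in p}} \frac{r_p}{(\pswap)^{j-1}}
= \sum_{p \in \p_{i,j}} \frac{r_p}{(\pswap)^{j-1}},
\end{align*}
where the last equality uses the partition just described (each $p$ contributes to exactly one inner sum). Multiplying both sides by $(\pswap)^{j-1}$ gives $(\pswap)^{j-1}\sum_{v^1:(s^0_i,v^1) \in E'} \g_{ij}(s^0_i,v^1) = \sum_{p \in \p_{i,j}} r_p$. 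Summing this identity over $1 \le i \le k$ and $1 \le j \le l_i$ yields exactly the claimed equality between the two objective functions.

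The only real subtlety — and the step I would be most careful about — is the claim that each path in $\p_{i,j}$ meets the set of source-edges in exactly one edge; this relies on the layered structure of $G'$ (edges only connect consecutive copies, so a path from $s^0_i$ cannot return to layer $0$) together with the fact that $j \ge 1$, so that the path is nonempty and its first edge is genuinely of the form $(s^0_i, v^1)$. Everything else is a reindexing of finite sums, so no convergence or ordering issues arise. I would also note in passing that since $\pswap > 0$ the division by $(\pswap)^{j-1}$ is harmless, which is implicitly needed for \eqref{app_eq_gij_prep_swap} to be well defined.
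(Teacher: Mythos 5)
Your proof is correct and follows essentially the same route as the paper's: unfold the definition of $\g_{ij}$ on the edges leaving $s^0_i$, cancel the factor $(\pswap)^{j-1}$, and sum over $i$ and $j$. In fact you are more careful than the paper's own (rather terse) argument, since you explicitly justify the key reindexing step — that each path in $\p_{i,j}$ uses exactly one edge of the form $(s^0_i,v^1)$, so the inner sums partition $\p_{i,j}$ — which the paper glosses over.
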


\begin{proof}
According to the definition of $\g_{ij}$ in equation \ref{app_eq_gij_prep_swap}, we get

\begin{equation}
\g_{ij}(u',v') = \sum_{\substack{p \in \p_{i,j}, \\ (u',v') \in p}} \frac{r_p}{(\pswap)^{j-1}}. 
\end{equation}

This implies, 

\begin{equation}
\sum_{\substack{p \in \p_{i,j}, \\ (u',v') \in p}} r_p= (\pswap)^{j-1}\g_{ij}(u',v').
\end{equation}

By taking the summation over all $1\leq i \leq k$ and $1 \leq j \leq l_i$ at the both side of the above equation we can prove this proposition. 

\end{proof}

In the rest of this section, we show that $\g_{ij}$ satisfies all the constraints from equations \ref{app_const_edge31} to equations \ref{app_const_edge33}. Note that, $\g_{ij}$ satisfies the first constraint of the edge-formulation by construction. In the next proposition, we show that $\g_{ij}$ satisfies the constraint equations: $\forall (u,v) \in E$ and for all $0 \leq t \leq \lmax-1$, $\sum_{i=1}^k \sum_{j=1}^{l_i}\sum_{t=0}^{\lmax -1}g_{ij}(u^t,v^{t+1}) \leq C(u,v)$.

\begin{proposition}
\label{prop_path_prep_cond2}
For all $(u,v) \in E'$, if we consider equation \ref{app_eq_gij_prep_swap} as the definition of the function $\g_{ij}:E' \rightarrow \R^+$ then, for all $1 \leq i \leq k,  1 \leq j \leq  l_{i},  0 \leq t \leq \lmax-1, \forall~(u,v) \in E$
\begin{equation}
 \sum_{i=1}^k \sum_{j=1}^{l_i} \sum_{t=0}^{\lmax -1}\tilde g_{ij}(u^t,v^{t+1}) \leq C(u,v).
\end{equation}
\end{proposition}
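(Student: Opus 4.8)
The plan is to verify the capacity constraint of the edge-based formulation directly by plugging in the definition \eqref{app_eq_gij_prep_swap}. First I would fix an arbitrary edge $(u,v)\in E$ of the original network. By the definition of $\g_{ij}$, for every copy-index $0\le t\le \lmax-1$ we have $\g_{ij}(u^t,v^{t+1}) = \sum_{p\in\p_{i,j}:\,(u^t,v^{t+1})\in p} r_p/(\pswap)^{|p|-1}$, where I have used that $|p|=j$ for every $p\in\p_{i,j}$, so $(\pswap)^{j-1}=(\pswap)^{|p|-1}$ on this index set. Substituting this into the left-hand side of the claimed inequality gives exactly
\begin{equation*}
\sum_{i=1}^k\sum_{j=1}^{l_i}\sum_{t=0}^{\lmax-1}\ \sum_{\substack{p\in\p_{i,j}:\\(u^t,v^{t+1})\in p}}\frac{r_p}{(\pswap)^{|p|-1}},
\end{equation*}
which is the left-hand side of the path-based capacity constraint \eqref{app_swp_const1} evaluated at the same edge $(u,v)$.

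Then I would simply invoke feasibility of the path-based solution: constraint \eqref{app_swp_const1} says precisely that this quantity is at most $C(u,v)$. Since $(u,v)\in E$ was arbitrary, and the copy-index $t$ as well as $i,j$ range over exactly the same sets in both formulations, the constraint \eqref{app_const_edge32} holds for $\g_{ij}$. It is worth remarking on the one bookkeeping point that makes the two sums literally identical: a path $p\in\p_{i,j}$ in $G'$ uses at most one edge of the form $(u^t,v^{t+1})$ (for a fixed underlying edge $(u,v)$), because the copies are visited in strictly increasing order of superscript, so there is no double counting when the outer sum over $t$ is introduced; conversely, every occurrence of $(u,v)$ in an underlying path corresponds to exactly one lifted edge $(u^t,v^{t+1})$ in $E'$.

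There is essentially no hard part here — the statement is a direct rewriting of one feasibility constraint into another via the substitution \eqref{app_eq_gij_prep_swap}. The only thing to be careful about is matching index ranges and the exponent: the path-based formulation writes $(\pswap)^{|p|-1}$ while the edge-based one groups paths of common length $j$ and would write $(\pswap)^{j-1}$, and these agree precisely because of the length-homogeneity of $\p_{i,j}$ guaranteed by the construction of $G'$. I would close by noting that the remaining constraint \eqref{app_const_edge33}, the flow-conservation property, will be handled in the next proposition, so that together with Proposition \ref{obj_pte_prep_swap} and the nonnegativity \eqref{app_const_edge31} (immediate from \eqref{app_eq_gij_prep_swap} since each $r_p\ge 0$), we will have shown $\g_{ij}$ is a feasible solution of the edge-based formulation attaining the same objective value.
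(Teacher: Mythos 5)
Your proposal is correct and follows essentially the same route as the paper's own proof: substitute the definition of $\g_{ij}$, use that every $p\in\p_{i,j}$ has $|p|=j$ so the exponents match, and invoke the path-based capacity constraint \eqref{app_swp_const1}. The extra remark about each lifted path containing at most one copy $(u^t,v^{t+1})$ of a given underlying edge is a nice bookkeeping observation the paper leaves implicit, but it does not change the argument.
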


\begin{proof}
For any $1 \leq i \leq k$, $1 \leq j \leq l_i$, and $0 \leq t \leq \lmax -1$ and for any edge $(u,v) \in E$, we define the function $\g_{ij}$ in equation \ref{app_eq_gij_prep_swap} as follows. 
\begin{equation}
\g_{ij}(u^t,v^{t+1}) = \sum_{\substack{p \in \p_{i,j}, \\ (u^t,v^{t+1}) \in p}} \frac{r_p}{(\pswap)^{j-1}}. 
\end{equation} 
By taking the sum over all the values of $i$, $j$ and $t$, we get,
\begin{equation}
\label{app_prep_prf_const_2a}
\sum_{i=1}^k\sum_{j=1}^{l_i}\sum_{t=0}^{\lmax -1}\g_{ij}(u^t,v^{t+1}) =\sum_{i=1}^k \sum_{j=1}^{l_i}\sum_{t=0}^{\lmax -1}\sum_{\substack{p \in \p_{i,j}, \\ (u^t,v^{t+1}) \in p}} \frac{r_p}{(\pswap)^{j-1}}.
\end{equation}

From the constraint equation \ref{app_swp_const1} of the path-formulation we get,

\begin{equation}
\sum_{i=1}^k \sum_{j=1}^{l_i}\sum_{t=0}^{\lmax -1}\sum_{\substack{p \in \p_{i,j}, \\ (u^t,v^{t+1}) \in p}} \frac{r_p}{(\pswap)^{|p|-1}} \leq C(u,v).
\end{equation}

Substituting this inequality in equation \ref{app_prep_prf_const_2a} we get,

\begin{equation}
\label{app_prep_prf_const_2a}
\sum_{i=1}^k\sum_{j=1}^{l_i}\sum_{t=0}^{\lmax -1}\g_{ij}(u^t,v^{t+1}) \leq C(u,v).
\end{equation}

This concludes the proof.
\end{proof}

In our next proposition we prove that $\g_{ij}$ satisfies the constraint proposed in equation \ref{app_const_edge33}, which is, for all $1 \leq i \leq k$, $1 \leq j \leq  l_{i}$,
 $u',v',w' \in V' : v' \neq s^0_i$, $v' \neq e^{j}_i$,

\begin{align*}
\sum_{u': (u',v') \in E'} \g_{ij}(u',v')&= \sum_{w': (v',w') \in E'} \g_{ij}(v',w').
\end{align*}

\begin{proposition}
\label{prop_path_prep_cond3}
For all $(u',v') \in E'$, if we consider equation \ref{app_eq_gij_prep_swap} as the definition of the function $\g_{ij}:E' \rightarrow \R^+$ then, for all $1 \leq i \leq k$, $1 \leq j \leq  l_{i}$,
$v' \in V' : v' \neq s^0_i$, $v' \neq e^{j}_i$, 

\begin{align*}
\sum_{u': (u',v') \in E'} \g_{ij}(u',v')&= \sum_{w': (v',w') \in E'} \g_{ij}(v',w').
\end{align*}
\end{proposition}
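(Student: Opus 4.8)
The plan is to substitute the definition of $\g_{ij}$ from equation \ref{app_eq_gij_prep_swap} into both sides and then interchange the order of summation so that the sum over paths is performed last. The left-hand side becomes
$\sum_{u':(u',v')\in E'}\ \sum_{p\in\p_{i,j}:\,(u',v')\in p}\frac{r_p}{(\pswap)^{j-1}}$, which I would rewrite as $\frac{1}{(\pswap)^{j-1}}\sum_{p\in\p_{i,j}} r_p\, a^{\mathrm{in}}_p(v')$, where $a^{\mathrm{in}}_p(v')$ is the number of edges of $p$ of the form $(u',v')$; similarly the right-hand side equals $\frac{1}{(\pswap)^{j-1}}\sum_{p\in\p_{i,j}} r_p\, a^{\mathrm{out}}_p(v')$ with $a^{\mathrm{out}}_p(v')$ the number of edges of $p$ of the form $(v',w')$. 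Thus the proposition reduces to the purely combinatorial claim that $a^{\mathrm{in}}_p(v')=a^{\mathrm{out}}_p(v')$ for every $p\in\p_{i,j}$.

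To establish that claim I would use the definition of a path recalled in appendix \ref{not}: a path $p\in\p_{i,j}$ is a finite sequence of edges joining a sequence of \emph{distinct} vertices, running from $s^0_i$ to $e^j_i$. Since $v'\neq s^0_i$ and $v'\neq e^j_i$, the vertex $v'$ — if it occurs in $p$ at all — occurs exactly once and as an internal vertex, so $p$ contains exactly one edge entering $v'$ and exactly one edge leaving $v'$, giving $a^{\mathrm{in}}_p(v')=a^{\mathrm{out}}_p(v')=1$. If $v'$ does not occur in $p$, then $p$ contains no edge of either form and $a^{\mathrm{in}}_p(v')=a^{\mathrm{out}}_p(v')=0$. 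In both cases the two counts agree, and plugging this back into the reorganised double sums shows the two sides of the proposition coincide term by term in the sum over $p$.

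I do not expect a genuine obstacle; the only point requiring care is to invoke the distinct-vertices property explicitly, since that is what forbids a walk from revisiting $v'$ and thereby breaking the one-in--one-out balance. On the expanded graph $G'$ this is in fact automatic: every edge of $E'$ goes from a layer-$t$ copy to a layer-$(t+1)$ copy, so any walk from $s^0_i$ to $e^j_i$ visits the layers $0,1,\dots,j$ in strictly increasing order and hence cannot repeat any vertex, which gives an alternative, self-contained justification of the same combinatorial fact.
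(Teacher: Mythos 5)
Your proof is correct and follows essentially the same route as the paper's: substitute the definition of $\g_{ij}$, interchange the order of summation, and reduce to the fact that an internal vertex of a simple path has exactly one incoming and one outgoing edge. Your version is in fact slightly more careful, since you justify that one-in--one-out property explicitly (via the distinct-vertices definition of a path, or the layered structure of $G'$), whereas the paper only asserts it in passing.
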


\begin{proof}
From the definition of $\g_{ij}$ in equation \ref{app_eq_gij_prep_swap} we have,
\begin{equation}
\g_{ij}(u',v') = \sum_{\substack{p \in \p_{i,j}, \\ (u',v') \in p}} \frac{r_p}{(\pswap)^{j-1}}. 
\end{equation}
Using this relation, for any node $v' \in V'$, such that $v' \neq s^0_i, v' \neq e^{j}_i$, we can rewrite the expression $\sum_{u': (u',v') \in E'} \g_{ij}(u',v')$ in a following manner.
\begin{equation}
\label{temp_prep_swap_pte_1}
\sum_{u': (u',v') \in E'} \g_{ij}(u',v') = \sum_{u': (u',v') \in E'}\sum_{\substack{p \in \p_i, \\ (u',v') \in p}} \frac{r_p}{(\pswap)^{j-1}}.
\end{equation}

For each edge $(u',v') \in E'$ (where $v' \in V'\setminus \{s^0_i,e^j_i\}$) can be part of multiple paths $p \in \p_{i,j}$. This implies, 

\begin{align*}
\sum_{\substack{p \in \p_{i,j} :\\ (u',v')\in p}} & \frac{r_p}{(\pswap)^{j-1}} = \sum_{w': (v',w') \in E'} \sum_{\substack{p \in \p_{i,j} :\\ (u',v')\in p \\ (v',w')\in p}} \frac{r_p}{(\pswap)^{j-1}}.
\end{align*}
Substituting the value of $\sum_{\substack{p \in \p_{i,j} :\\ (u',v')\in p}} \frac{r_p}{(\pswap)^{j-1}}$ in equation \ref{temp_prep_swap_pte_1} we get,

\begin{align}
\label{temp2}
\sum_{\substack{u': \\ (u',v') \in E'}} \g_{ij}(u',v')  & =\sum_{\substack{u': \\ (u',v') \in E'}} \sum_{\substack{w': \\ (v',w') \in E'}} \sum_{\substack{p \in \p_{i,j} :\\ (u',v')\in p \\ (v',w')\in p}} \frac{r_p}{(\pswap)^{j-1}}
\end{align}

At the right hand side of the above equation, by interchanging the summation over $u'$ and $w'$ we get,

\begin{align}
\nonumber
\sum_{u': (u',v') \in E'} \g_{ij}(u',v')& = \hspace{-0.15in} \sum_{w': (v',w') \in E'}  \sum_{u': (u',v') \in E'} \sum_{\substack{p \in \p_{i,j} :\\ (u',v')\in p \\ (v',w')\in p}} \frac{r_p}{(\pswap)^{j-1}}
\end{align}
As for an intermediate node $v'$, the total number of paths, incoming to it same as the total number of paths leaving it, so we can rewrite the above expression as, 
\begin{align}
\nonumber
\sum_{u': (u',v') \in E'} \g_{ij}(u',v')& =  \sum_{w': (v',w') \in E'}  \sum_{\substack{p \in \p_{i,j} : \\ (v',w')\in p}} \frac{r_p}{(\pswap)^{j-1}}.
 \end{align}
 
According to the definition of $\g_{ij}$, we have, $\g_{ij}(v',w') = \sum_{\substack{p \in \p_{i,j} : \\ (v',w')\in p}} \frac{r_p}{(\pswap)^{j-1}}$. By substituting this relation on the right hand side of the above expression we get,
\begin{align}
 \nonumber
\sum_{u': (u',v') \in E'} \g_{ij}(u',v') &=  ~~~~\sum_{w': (v',w') \in E'}   \g_{ij}(v',w').
\end{align}

This concludes the proof.

\end{proof}

Proposition \ref{prop_path_prep_cond2} and proposition \ref{prop_path_prep_cond3} certifies that $\g_{ij}$ all the constraints, proposed in the edge-based formulation and proposition \ref{obj_pte_prep_swap} proves that $\g_{ij}$ corresponds to the objective function of the edge-formulation. This implies, $\g_{ij}$ corresponds to a valid solution of the edge-based formulation. In the next section, we show how to construct the path-based formulation from the edge-based formulation.

\subsection{From the edge-based formulation to the path-based formulation}
\label{app_etp_prep_swap}

In this section, we show that the path-based formulation is at least as good as the edge-based formulation. We assume that we have the solution of the edge-based formulation, defined in section \ref{app_edge_form}. From this solution we extract a solution for the path-based formulation. We use algorithm \ref{app_flow_dec3} for extracting the paths and the achievable rates for the path-based formulation. In algorithm \ref{app_flow_dec3}, at step $8$, for the $i,j$-th demand $(s^0_i,e^j_i) \in \dmod$ we compute the entanglement distribution rate $\ra_{p_{j,m}}$ across a path $p_{j,m} \in \p_{i,j}$. In order to be a valid solution of the path-based formulation, proposed in table \ref{app_path_form} we need to show that the extracted rates should satisfy the constraints in equation \ref{app_swp_const1} and equation \ref{app_swp_const2}. We also need to show that, the objective function which is computed from these extracted rates should corresponds to the objective function of the edge-based formulation. In order to do that, first we need to prove some properties of the function $F_{i,j,m}$, used in algorithm \ref{app_flow_dec3}. In the next proposition, we show that for all the edges $(u',v') \in E'$, the value of the function $F_{i,j,m} \geq 0$.


\begin{algorithm}
\begin{mdframed}
\vspace{0.1in}
\hspace*{\algorithmicindent} \textbf{Input: } The solution we obtain from the edge-based formulation, i.e., $\{\{g_{ij}(u',v')\}_{(u',v')\in E'}\}_{1\leq i\leq k, 1 \leq j \leq l_i}$. \\
    \hspace*{\algorithmicindent} \textbf{Output: } Set of paths as well as the rate across each of the paths $\{\p_{i,j}\}_{1\leq i\leq k, 1 \leq j \leq l_i}$.
\begin{algorithmic}[1]
\FOR{($i=1$; $i \leq k$; $i++$)}
  \FOR{($j=1$; $j \leq l_i$; $j++$)}
  \State $m =0$.
  \State $F_{i,j}=g_{ij}$.
  \State $\p_{i,j} = \emptyset$.
 \WHILE{$\sum_{v:(s^0_i,v^1)\in E'} F_{i,j,m}(s^0_i,v^1) >0$}
  \State Find a path $p_{j,m}$ from $s^0_i$ to $e^j_i$ such that,
   \State $\forall(u',v')\in p_{j,m}$, $F_{i,j,m}(u',v') >0$
  \State $\ra_{p_{j,m}} = (\pswap)^{j-1}\min_{(u,v)\in p_{j,m}}\{F_{i,j,m}(u',v')\}$
   \State $\forall (u',v')\in p_{j,m}$, we define $F_{i,j,m+1}(u',v')$ as,
  \State $F_{i,j,m+1}(u',v') := F_{i,j,m}(u',v') - \frac{\ra_{p_{j,m}}}{(\pswap)^{j-1}}$.
  \State $\p_{i,j} = \p_{i,j} \cup (p_{j,m},\ra_{p_{j,m}})$.
    \State $m=m+1$.
  \ENDWHILE
  \ENDFOR
\ENDFOR
\end{algorithmic}
\vspace{0.1in}
\end{mdframed}
 \caption{Path Extraction and Rate Allocation Algorithm.}
 \label{app_flow_dec3}
\end{algorithm}

\begin{proposition}
\label{prop_prep_cond1}
In algorithm \ref{app_flow_dec3} for all $1 \leq i \leq k$, $1 \leq j \leq l_i$, $m \geq 0$, $(u',v') \in E'$,
\begin{equation}
F_{i,j,m}(u',v') \geq 0.
\end{equation}
\end{proposition}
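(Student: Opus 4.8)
The plan is to prove the nonnegativity of $F_{i,j,m}$ by induction on the iteration counter $m$, for arbitrary fixed $i$ and $j$. The base case $m = 0$ is immediate: by the initialization in algorithm \ref{app_flow_dec3} we have $F_{i,j,0} = g_{ij}$, and $g_{ij}(u',v') \geq 0$ for all edges $(u',v') \in E'$ because $g_{ij}$ is a solution of the edge-based formulation and hence satisfies constraint equation \ref{app_const_edge31}. So from now on I fix $m \geq 0$, assume $F_{i,j,m}(u',v') \geq 0$ for every $(u',v') \in E'$, and show the same for $F_{i,j,m+1}$.

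For the inductive step I would split on whether an edge lies on the discovered path $p_{j,m}$. If $(u',v') \notin p_{j,m}$, then by the update rule (step 11 of algorithm \ref{app_flow_dec3}, i.e. equation \ref{prep_swap_update}) we have $F_{i,j,m+1}(u',v') = F_{i,j,m}(u',v')$, which is $\geq 0$ by the induction hypothesis. If $(u',v') \in p_{j,m}$, then
\begin{equation}
F_{i,j,m+1}(u',v') = F_{i,j,m}(u',v') - \frac{\ra_{p_{j,m}}}{(\pswap)^{j-1}} = F_{i,j,m}(u',v') - \min_{(a',b') \in p_{j,m}} F_{i,j,m}(a',b'),
\end{equation}
where the second equality uses the definition of $\ra_{p_{j,m}}$ in step 9 (equation \ref{prep_swap_rate}), so that the factor $(\pswap)^{j-1}$ cancels. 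Since the minimum is taken over a set of edges that includes $(u',v')$ itself, we get $F_{i,j,m+1}(u',v') = F_{i,j,m}(u',v') - \min_{(a',b') \in p_{j,m}} F_{i,j,m}(a',b') \geq 0$. This completes the induction.

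The only genuine subtlety — and the step I would treat most carefully — is making sure the argument is not circular: the path $p_{j,m}$ is required (step 8) to consist of edges with $F_{i,j,m}(u',v') > 0$, and one must know that such a path exists whenever the while-loop condition $\sum_{v^1:(s^0_i,v^1)\in E'} F_{i,j,m}(s^0_i,v^1) > 0$ holds. That existence is exactly what is asserted in the proof of proposition \ref{prop_prep_cond2}: it follows from the flow-conservation constraint equation \ref{app_const_edge33} together with nonnegativity, since $F_{i,j,m}$ (being $g_{ij}$ minus a sum of path-flows along paths from $s^0_i$ to $e^j_i$) still satisfies flow conservation at every intermediate node, so positive outflow at the source can be traced forward edge-by-edge to $e^j_i$. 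Thus the induction hypothesis ($F_{i,j,m} \geq 0$) is precisely what legitimizes the path-finding step inside iteration $m$, and the cancellation of $(\pswap)^{j-1}$ is what makes the subtracted quantity equal to an actual edge value of $F_{i,j,m}$ rather than something larger. No further estimates are needed.
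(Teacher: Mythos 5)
Your proposal is correct and follows essentially the same argument as the paper: the subtracted quantity $\ra_{p_{j,m}}/(\pswap)^{j-1}$ equals the minimum of $F_{i,j,m}$ over the edges of $p_{j,m}$, hence is at most the value at any edge on the path, so the update preserves nonnegativity. Your version is simply a more careful writeup — making the induction, the base case $F_{i,j,0}=g_{ij}\geq 0$, and the unchanged off-path edges explicit — of what the paper states tersely.
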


\begin{proof}
In the algorithm \ref{app_flow_dec3}, after each iteration over $m$, we compute $F_{i,j,m}(u',v') = F_{i,j,m-1}(u',v') - \frac{\ra_{p_{j,m-1}}}{(\pswap)^{j-1}}$, where $\ra_{p_{j,m-1}} = (\pswap)^{j-1}\min_{(u',v')\in p_{j,m}}\{F_{i,j,m}(u',v')\}$. This implies, at least for one edge $(u',v') \in E'$, $F_{i,j,m}(u',v') =0$ and for the other edges $(u',v') \in p_{j,m}$,
\begin{align*}
\ra_{p_j} \leq F_{i,j}(u',v')&(\pswap)^{j-1}.
\end{align*}
This implies, 
\begin{align*}
F_{i,j}(u',v') - \frac{\ra_{p_j}}{(\pswap)^{j-1}} &\geq 0.
\end{align*}

This concludes the proof.

\end{proof}

In this next proposition, we show that for each demand $i,j$ the total number of paths $|\p_{i,j}|$ is upper bounded by $|E'|$.

\begin{proposition}
\label{app_prop_prep_cond2}
In algorithm \ref{app_flow_dec3},
\begin{equation}
|\p_{i,j}| \leq |E'|.
\end{equation}
\end{proposition}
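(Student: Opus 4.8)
The plan is to bound the number of iterations $m$ of the inner \texttt{while}-loop of algorithm \ref{app_flow_dec3} for a fixed demand pair $(i,j)$, and conclude $|\p_{i,j}| \leq |E'|$ since exactly one path is appended to $\p_{i,j}$ per iteration. The key observation is that each iteration ``saturates'' at least one edge of $E'$, in the sense that it sets the residual value $F_{i,j,m+1}(u',v')$ to zero on at least one edge, and once an edge is zeroed it stays zero for all later iterations. If both of these claims hold, then after at most $|E'|$ iterations there are no edges left with positive residual value emanating from $s^0_i$, and the loop terminates.

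First I would establish the first claim. By step 9 of algorithm \ref{app_flow_dec3}, $\ra_{p_{j,m}} = (\pswap)^{j-1}\min_{(u',v')\in p_{j,m}}\{F_{i,j,m}(u',v')\}$, so the minimum is attained at some edge $(a',b') \in p_{j,m}$, i.e.\ $F_{i,j,m}(a',b') = \ra_{p_{j,m}}/(\pswap)^{j-1}$. By the update rule in step 11, $F_{i,j,m+1}(a',b') = F_{i,j,m}(a',b') - \ra_{p_{j,m}}/(\pswap)^{j-1} = 0$. Moreover, by the path-finding condition in step 8, every edge on $p_{j,m}$ had $F_{i,j,m}(u',v') > 0$, so $\ra_{p_{j,m}} > 0$ and the iteration makes genuine progress (no trivial zero-rate iterations). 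This is the heart of the argument; it is short but is the step one must get right.

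Next I would establish that a zeroed edge stays zeroed, i.e.\ monotonicity: $F_{i,j,m+1}(u',v') \leq F_{i,j,m}(u',v')$ for every edge, and $F_{i,j,m}(u',v') \geq 0$ always (the latter is exactly proposition \ref{prop_prep_cond1}). Monotonicity is immediate since the update only ever subtracts the non-negative quantity $\ra_{p_{j,m}}/(\pswap)^{j-1}$ from edges on $p_{j,m}$ and leaves all other edges unchanged. Hence once $F_{i,j,m_0}(u',v') = 0$ for some edge, proposition \ref{prop_prep_cond1} and monotonicity force $F_{i,j,m}(u',v') = 0$ for all $m \geq m_0$. Consequently the set $S_m = \{(u',v') \in E' : F_{i,j,m}(u',v') > 0\}$ is strictly decreasing in size as long as the loop runs: $|S_{m+1}| \leq |S_m| - 1$. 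Since $|S_0| \leq |E'|$, after at most $|E'|$ iterations $S_m$ must become empty, and in particular $\sum_{v^1 : (s^0_i,v^1) \in E'} F_{i,j,m}(s^0_i,v^1) = 0$, so the loop exits with $m \leq |E'|$.

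Finally, since each pass through the loop body adds exactly one element $(p_{j,m}, \ra_{p_{j,m}})$ to $\p_{i,j}$ (step 12) and there are at most $|E'|$ such passes, we conclude $|\p_{i,j}| \leq |E'|$. I do not anticipate a serious obstacle here; the only point requiring care is ensuring that the path in steps 7--8 genuinely exists whenever $\sum_{v^1}F_{i,j,m}(s^0_i,v^1) > 0$ — this follows from the flow-conservation constraint (equation \ref{app_const_edge33}), which is preserved by the update rule because step 11 subtracts the same constant $\ra_{p_{j,m}}/(\pswap)^{j-1}$ from every edge along a single $s^0_i$--$e^j_i$ path, and subtracting a unit of flow along a path preserves conservation at every internal node. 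This is presumably the content used implicitly in proposition \ref{prop_prep_cond2} and can be cited or re-derived as needed.
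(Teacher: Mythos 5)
Your proof is correct and follows essentially the same route as the paper's: each iteration of the inner loop zeroes out the residual $F_{i,j,m}$ on at least one edge of the discovered path, and since there are only $|E'|$ edges the loop terminates after at most $|E'|$ iterations. The only difference is that you make explicit the monotonicity step (a zeroed edge is never revisited, so the set of positive-residual edges strictly shrinks), which the paper leaves implicit; this is a worthwhile clarification but not a different argument.
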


\begin{proof}
Due to the flow conservation property of the edge-based formulation, if for some neighbour of $s^0_i$, $F_{i,j,m}(s^0_i,v^1) > 0$ then there exist a path $p_{j,m}$ from $s^0_i$ to $e^j_i$ such that $F_{i,j,m}(u',v') >0$ for all $(u',v') \in p_{j,m}$. Note that, at each step $m$ of the algorithm \ref{app_flow_dec3} there exist at least one edge $(u',v') \in E'$ in the discovered the path $p_{j,m}$, such that $F_{i,j,m+1}(u',v') =0$. As there are in total, $|E'|$ number of edges and the algorithm runs until $\sum_{v^1:(s^0_i,v^1) \in E'}F_{i,j,m+1}(s^0_i,v^1)=0$, so the maximum value of $m$ could not be larger than $|E'|$.
\end{proof}

In the edge-based formulation we have the flow conservation for each $g_{ij}$. In the next proposition we show that the flow conservation also holds for all $F_{i,j,m}$.

\begin{proposition}
\label{prop_prep_cond_int}
In algorithm \ref{app_flow_dec3} for all $1\leq i \leq k$, $1 \leq j \leq l_i$, $m \geq 0$ and $\forall ~ v' \in V' \setminus \{s^0_i,e^j_i\}$ ,
\begin{equation}
  \sum_{\substack{u': \\ (u',v') \in E'}} F_{i,j,m}(u',v')= \sum_{\substack{w': \\ (v',w') \in E'}} F_{i,j,m}(v',w').
\end{equation}
\end{proposition}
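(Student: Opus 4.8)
The plan is to prove the claim by induction on $m$, using the fact that the conservation identity in equation \ref{app_const_edge33} is \emph{linear} in the edge function, and that at each step of algorithm \ref{app_flow_dec3} the quantity subtracted from $F_{i,j,m}$ is supported on a single simple path.

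First I would settle the base case $m=0$. By line~4 of algorithm \ref{app_flow_dec3} we have $F_{i,j,0}=g_{ij}$, and the desired identity at every $v' \in V' \setminus \{s^0_i,e^j_i\}$ is then exactly the flow-conservation constraint \ref{app_const_edge33} of the edge-based formulation, which $g_{ij}$ satisfies by hypothesis since we are given a solution of that LP.

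For the inductive step, assume $F_{i,j,m}$ satisfies conservation at every $v' \in V' \setminus \{s^0_i,e^j_i\}$. Define the increment $\Delta\colon E' \to \R$ by $\Delta(u',v') = \ra_{p_{j,m}}/(\pswap)^{j-1}$ when $(u',v') \in p_{j,m}$ and $\Delta(u',v')=0$ otherwise, so that by line~11, $F_{i,j,m+1} = F_{i,j,m} - \Delta$ as functions on $E'$. By linearity it then suffices to show $\Delta$ alone is conserved at every intermediate node $v'$, and here I would split into two cases. If $v'$ is not a vertex of $p_{j,m}$, then no edge incident to $v'$ lies on $p_{j,m}$, so both sums in the identity for $\Delta$ are $0$. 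If $v'$ is a vertex of $p_{j,m}$ with $v' \notin \{s^0_i,e^j_i\}$, then $v'$ is an internal vertex of the path; because $p_{j,m}$ is a sequence of \emph{distinct} vertices, exactly one edge of $p_{j,m}$ enters $v'$ and exactly one leaves it, so $\sum_{u':(u',v')\in E'}\Delta(u',v') = \ra_{p_{j,m}}/(\pswap)^{j-1} = \sum_{w':(v',w')\in E'}\Delta(v',w')$. In both cases $\Delta$ is conserved at $v'$, hence so is $F_{i,j,m+1}=F_{i,j,m}-\Delta$, closing the induction.

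I do not expect a serious obstacle here; the statement is essentially bookkeeping. The one point that needs care is the role of the endpoints: $s^0_i$ and $e^j_i$ are excluded from the statement precisely because they are the source and sink of every extracted path $p_{j,m}$, where $\Delta$ is unbalanced (one incident path-edge instead of two), and the argument above only works at internal vertices. A secondary point worth stating explicitly is that line~7--8 of the algorithm does produce a genuine simple path with all edges having positive $F_{i,j,m}$ value — this follows from conservation of $F_{i,j,m}$ together with $\sum_{v^1:(s^0_i,v^1)\in E'}F_{i,j,m}(s^0_i,v^1)>0$ (the while-loop condition), which is exactly the fact already invoked in the proof of Proposition \ref{app_prop_prep_cond2}, so it may simply be cited.
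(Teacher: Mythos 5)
Your proof is correct and takes essentially the same route as the paper's: both reduce the claim to the flow-conservation constraint \ref{app_const_edge33} satisfied by $g_{ij}$ together with the observation that a simple source-to-sink path contributes equally to the in-sum and the out-sum at every internal vertex (and contributes nothing at vertices off the path). The only difference is presentational --- you peel off one path at a time by induction on $m$ and use linearity, whereas the paper writes $g_{ij}$ as $F_{i,j,m}$ plus the cumulative contribution of all paths discovered so far and balances that cumulative sum in one step.
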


In algorithm \ref{app_flow_dec3} for the $i,j$-th demand, we discover a path with each iteration over $m$. We denote the set of discovered paths up to the $m$-th iteration as $\p_{i,j,m}$. After each discovery of the path, we allocate the rate $r_{p_{j,m}}$ across that path using a function $F_{i,j,m}$ and compute the value of the new function $F_{i,j,m+1}$ by subtracting the allocated rate from $F_{i,j,m}$. For $m=0$, we have $F_{i,j,0} = g_{ij}$. This implies, for every iteration $m$, we can rewrite $g_{ij}$ as a function of $F_{i,j,m}$ and the sum of the allocated rates so far. In the edge formulation, the function $g_{ij}$ satisfies the flow conservation property. Here we use this relation and substitute $g_{ij}$ with the function of $F_{i,j,m}$, then by doing some simple algebraic manipulation we could show that $F_{i,j,m}$ also satisfies the flow conservation property. 

\begin{proof}
In algorithm \ref{app_flow_dec3} suppose for any $m \geq 0$, the set of discovered paths are $\p_{i,j,m}$. This implies, for any edge $(u',v') \in E'$,

\begin{align}
\nonumber
F_{i,j,m}(u',v') &= g_{ij}(u',v') - \sum_{p_{j,m} \in P_{i,j,m}} \frac{\ra_{p_{j,m}}}{(\pswap)^{j-1}}
\end{align}
By exchanging the position of $g_{ij}(u',v')$ and $F_{i,j,m}(u',v')$ in the above equation we get,
\begin{align}
\label{g__prep_temp1}
g_{ij}(u',v') &= F_{i,j,m}(u',v') + \sum_{\substack{p_{j,m} \in P_{i,j,m}: \\ (u',v') \in p_{j,m}}} \frac{\ra_{p_{j,m}}}{(\pswap)^{j-1}}.
\end{align}

From the flow conservation property (equation \ref{app_const_edge33}) of the edge formulation we have,

\begin{align*}
 \sum_{\substack{u': \\ (u',v') \in E'}} g_{ij}(u',v')=  \sum_{\substack{w': \\ (v',w') \in E'}} g_{ij}(v',w').
\end{align*}

Substituting the value of $g_{ij}(u',v')$ from the equation \ref{g__prep_temp1} we get,

\begin{align}
\nonumber
\sum_{u: (u',v') \in E'} & \left(F_{i,j,m}(u',v') + \sum_{\substack{p_{j,m} \in P_{i,j,m}: \\ (u',v') \in p_{j,m}}} \frac{\ra_{p_{j,m}}}{(\pswap)^{j-1}} \right)\\ \label{prep_temp_flow2}
=   \sum_{w': (v',w') \in E'} &\left(F_{i,j,m}(v',w') + \sum_{\substack{p_{j,m} \in P_{i,j,m}: \\ (v',w') \in p_{j,m}}} \frac{\ra_{p_{j,m}}}{(\pswap)^{j-1}} \right).
\end{align}

As for an intermediate node $v'$, the number of the incoming paths to it is same as the number of outgoing paths from it. This implies, 

\begin{align*}
\sum_{\substack{p_{j,m} \in P_{i,j,m}: \\ (u',v') \in p_{j,m}}} \frac{\ra_{p_{j,m}}}{(\pswap)^{j-1}}  = \sum_{\substack{p_{j,m} \in P_{i,j,m}: \\ (v',w') \in p_{j,m}}}  \frac{\ra_{p_{j,m}}}{(\pswap)^{j-1}} .
\end{align*}
By substituting this relation in equation \ref{prep_temp_flow2} we get,

\begin{align*}
\sum_{u': (u',v') \in E'} & F_{i,j,m}(u',v') = \sum_{w': (v',w') \in E'} F_{i,j,m}(v',w').
\end{align*}

This concludes the proof.

\end{proof}

In the next proposition, we show that, the rates we compute in algorithm \ref{app_flow_dec3} satisfies the condition \ref{app_swp_const1} of the path-based formulation.

\begin{proposition}
\label{prop_prep_cond3}
For all $(u,v) \in E$,
\begin{align}
\sum_{i=1}^k \sum_{j=1}^{l_i}\sum_{t=0}^{\lmax -1}\sum_{\substack{p_{j,m} \in \p_{i,j} :\\ (u^t,v^{t+1}) \in p_{j,m}}}&\frac{\ra_{p_{j,m}}}{(\pswap)^{j-1}}\leq C(u,v),
\end{align}
where $\ra_{p_{j,m}}$ is defined in algorithm \ref{app_flow_dec3} (see step $8$).
\end{proposition}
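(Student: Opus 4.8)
The plan is to bound the left-hand side by substituting the definition of $\ra_{p_{j,m}}$ and then relate the resulting expression back to a quantity that the edge-based formulation already controls via its capacity constraint (equation \ref{app_const_edge32}). The key observation is that the update rule in algorithm \ref{app_flow_dec3} (step 11) guarantees that, for every iteration $m$ and every edge $(u',v') \in p_{j,m}$, the amount $\frac{\ra_{p_{j,m}}}{(\pswap)^{j-1}}$ is subtracted from $F_{i,j,m}(u',v')$ to obtain $F_{i,j,m+1}(u',v')$. Hence, telescoping over $m$, for any fixed edge $(u^t,v^{t+1}) \in E'$ and any fixed demand $i,j$,
\begin{equation}
\sum_{\substack{p_{j,m} \in \p_{i,j} :\\ (u^t,v^{t+1}) \in p_{j,m}}} \frac{\ra_{p_{j,m}}}{(\pswap)^{j-1}} = g_{ij}(u^t,v^{t+1}) - F_{i,j,m^*}(u^t,v^{t+1}),
\end{equation}
where $m^*$ is the final iteration (the algorithm terminates by proposition \ref{app_prop_prep_cond2}). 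By proposition \ref{prop_prep_cond1}, $F_{i,j,m^*}(u^t,v^{t+1}) \geq 0$, so this sum is at most $g_{ij}(u^t,v^{t+1})$.

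First I would make that telescoping argument precise: since at each iteration $m$ where the path passes through $(u^t,v^{t+1})$ the value $F_{i,j,\cdot}(u^t,v^{t+1})$ decreases by exactly $\frac{\ra_{p_{j,m}}}{(\pswap)^{j-1}}$, and at iterations where the path does not pass through that edge the value is unchanged, the cumulative decrease equals the indicated sum. Then I would chain the inequalities:
\begin{equation}
\sum_{i=1}^k \sum_{j=1}^{l_i}\sum_{t=0}^{\lmax -1}\sum_{\substack{p_{j,m} \in \p_{i,j} :\\ (u^t,v^{t+1}) \in p_{j,m}}}\frac{\ra_{p_{j,m}}}{(\pswap)^{j-1}} \leq \sum_{i=1}^k \sum_{j=1}^{l_i}\sum_{t=0}^{\lmax -1} g_{ij}(u^t,v^{t+1}) \leq C(u,v),
\end{equation}
where the last step is precisely the capacity constraint equation \ref{app_const_edge32} of the edge-based formulation, which $g_{ij}$ satisfies by assumption.

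The main obstacle is not conceptual but bookkeeping: one must be careful that the sum over ``$p_{j,m} \in \p_{i,j}$ with $(u^t,v^{t+1}) \in p_{j,m}$'' in the statement really does index the same iterations $m$ that contribute decrements to $F_{i,j,\cdot}(u^t,v^{t+1})$, and that the rate $\ra_{p_{j,m}}$ attached to a path in $\p_{i,j}$ is the one used in the update at the iteration when that path was discovered. Since each iteration of the while-loop discovers exactly one path and the paths are stored together with their rates in $\p_{i,j}$ (step 12), this correspondence is exact, and no path is counted twice. Once that identification is made, the proof reduces to the telescoping identity plus nonnegativity of $F_{i,j,m^*}$ (proposition \ref{prop_prep_cond1}) and the edge-based capacity constraint, with the sum over $t$ and over $(i,j)$ carried through unchanged.
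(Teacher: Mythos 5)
Your proposal is correct and follows essentially the same route as the paper's proof: express the cumulative decrement of $F_{i,j,\cdot}(u^t,v^{t+1})$ as $g_{ij}(u^t,v^{t+1})$ minus the final (nonnegative, by proposition \ref{prop_prep_cond1}) value, conclude $\sum_{p}\ra_{p}/(\pswap)^{j-1} \leq g_{ij}(u^t,v^{t+1})$, and then sum over $i$, $j$, $t$ and invoke the capacity constraint of the edge-based formulation. Your explicit attention to the correspondence between iterations of the while-loop and the paths stored in $\p_{i,j}$ is a welcome clarification of a point the paper glosses over.
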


In algorithm \ref{app_flow_dec3} for the $i,j$-th demand, we discover a path with each iteration over $m$. We denote the set of discovered paths up to the $m$-th iteration as $\p_{i,j,m}$. After each discovery of the path, we allocate the rate $\ra_{p_{j,m}}$ across that path using a function $F_{i,j,m}$ and compute the value of the new function $F_{i,j,m+1}$ by subtracting the allocated rate from $F_{i,j,m}$. For $m=0$, we have $F_{i,j,0} = g_{ij}$. This implies, for every iteration $m$, we can rewrite $g_{ij}$ as a function of $F_{i,j,m}$ and the sum of the allocated rates so far. In the edge formulation, for every edge $(u,v) \in E$ and for every $0\leq t \leq \lmax -1$ we have that $\sum_{i=1}^k\sum_{j=1}^{l_i} \sum_{t=0}^{\lmax -1}g_{ij}(u^t,v^{t+1}) \leq C(u,v)$. Here, we use this relation and substitute $g_{ij}$ with the function of $F_{i,j,m}$, then by doing some simple algebraic manipulation we could show that the sum of the extracted rate is also upper bounded by the capacity of that edge. 

\begin{proof}

In algorithm \ref{app_flow_dec3}, suppose for any $m \geq 0$ and for any $1 \leq i \leq k$, $1 \leq j \leq l_i$, and $0\leq t \leq \lmax-1$, the set of discovered paths is $\p_{i,j,m}$. This implies, for any edge $(u,v) \in E$, and for any $0 \leq t \leq \lmax -1$

\begin{align}
\nonumber
F_{i,j,m}(u^t,v^{t+1}) &= g_{ij}(u^t,v^{t+1}) - \sum_{\substack{p_{j,m} \in P_{i,j,m} \\ (u^t,v^{t+1}) \in p_{j,m}}} \frac{\ra_{p_{j,m}}}{(\pswap)^{j-1}}.
\end{align}

From proposition \ref{prop_prep_cond1} we have that for all $0\leq m \leq |E'|$ and for all the edges $(u^t,v^{t+1}) \in E'$, $F_{i,j,m}(u^t,v^{t+1}) \geq 0$. This implies, 

\begin{align}
\nonumber
g_{ij}(u^t,v^{t+1}) - \sum_{\substack{p_{j} \in P_{i,j}: \\ (u^t,v^{t+1}) \in p_{j}}} \frac{\ra_{p_{j}}}{(\pswap)^{j-1}} \geq 0\\
\label{g__prep_temp2}
g_{ij}(u^t,v^{t+1}) \geq \sum_{\substack{p_{j} \in P_{i,j}: \\ (u^t,v^{t+1}) \in p_{j}}} \frac{\ra_{p_{j}}}{(\pswap)^{j-1}}. 
\end{align}

From the edge-based formulation we have, 

$$\sum_{i=1}^k\sum_{j=1}^{l_i} \sum_{t=0}^{\lmax -1} g_{ij}(u^t,v^{t+1}) \leq C(u,v),$$

for all the edges $(u,v) \in E$ and for all $0 \leq t \leq \lmax -1$. Substituting this relation in equation \ref{g__prep_temp2} we get, 

\begin{align}
\label{f_prep_temp1}
\sum_{i=1}^k\sum_{j=1}^{l_i} \sum_{t=0}^{\lmax -1} \sum_{\substack{p_{j} \in P_{i,j}: \\ (u^t,v^{t+1}) \in p_{j}}} \frac{\ra_{p_{j}}}{(\pswap)^{j-1}} & \leq C(u,v).
\end{align}

This concludes the proof.

\end{proof}

We finish this section by showing the equivalence of the objective functions for both of the formulations.

\begin{proposition}[Equivalence of the objective functions]
\label{prop_cond3}
In algorithm \ref{app_flow_dec3},
\begin{equation}
\sum_{i=1}^k\sum_{j=1}^{l_i} (\pswap)^{j-1}\sum_{\substack{v^1:(s^0_i,v^1) \in E'}}  g_{ij}(s^0_i,v^1) = \sum_{i=1}^k\sum_{j=1}^{l_i}\sum_{p \in \p_{i,j}}\ra_{p}.
\end{equation}
\end{proposition}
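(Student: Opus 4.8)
The plan is to track the total ``rate mass'' as it gets peeled off during algorithm \ref{app_flow_dec3} and show it exactly equals the objective value of the edge-based formulation. First I would fix $i$ and $j$ and focus on a single inner \texttt{while}-loop. Recall from equation \ref{prep_swap_update} that each iteration updates $F_{i,j,m+1}(u',v') = F_{i,j,m}(u',v') - \ra_{p_{j,m}}/(\pswap)^{j-1}$ only along the edges of $p_{j,m}$; in particular, for the edges leaving $s^0_i$, exactly one such edge lies on $p_{j,m}$ (since $p_{j,m}$ is a simple path starting at $s^0_i$), so the quantity $\sum_{v^1:(s^0_i,v^1)\in E'} F_{i,j,m}(s^0_i,v^1)$ decreases by precisely $\ra_{p_{j,m}}/(\pswap)^{j-1}$ at step $m$. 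By Proposition \ref{app_prop_prep_cond2} the loop terminates after finitely many, say $M_{ij}$, iterations, and by the loop guard the final value $\sum_{v^1:(s^0_i,v^1)\in E'} F_{i,j,M_{ij}}(s^0_i,v^1) = 0$.

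Next I would telescope: since $F_{i,j,0} = g_{ij}$ and the outflow from $s^0_i$ drops by $\ra_{p_{j,m}}/(\pswap)^{j-1}$ at each step, summing over $m = 0, \ldots, M_{ij}-1$ gives
\begin{equation}
\sum_{v^1:(s^0_i,v^1)\in E'} g_{ij}(s^0_i,v^1) = \sum_{m=0}^{M_{ij}-1} \frac{\ra_{p_{j,m}}}{(\pswap)^{j-1}} = \frac{1}{(\pswap)^{j-1}}\sum_{p \in \p_{i,j}} \ra_{p},
\end{equation}
where the last equality is just the definition of $\p_{i,j}$ as the collection of all $(p_{j,m}, \ra_{p_{j,m}})$ accumulated by the loop. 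Multiplying both sides by $(\pswap)^{j-1}$ and then summing over all $1 \le i \le k$ and $1 \le j \le l_i$ yields exactly the claimed identity between the edge-based objective \ref{app_edge_form_plen3} and $\sum_{i=1}^k \sum_{j=1}^{l_i} \sum_{p\in\p_{i,j}} \ra_p$.

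The one point that needs genuine care — and which I regard as the main obstacle — is the claim that ``the outflow from $s^0_i$ drops by exactly $\ra_{p_{j,m}}/(\pswap)^{j-1}$ at step $m$'': this requires that $p_{j,m}$ uses exactly one edge incident to $s^0_i$ and does not revisit $s^0_i$. This follows because a path is a sequence of \emph{distinct} vertices (as fixed in the Notations section), so $s^0_i$ appears only as the first vertex and contributes a single outgoing edge to $p_{j,m}$; moreover $s^0_i$ has no incoming edges in $G'$ by construction (all edges go from $V^t$ to $V^{t+1}$, and $s^0_i \in V^0$), so no subtraction is ever applied to an edge \emph{into} $s^0_i$. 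With this observation the telescoping is immediate and the rest is bookkeeping. I would also invoke Proposition \ref{prop_prep_cond1} to guarantee all the $\ra_{p_{j,m}}$ are nonnegative, so that no cancellation hides mass, and Proposition \ref{app_prop_prep_cond2} for finiteness of the sum.
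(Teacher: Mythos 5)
Your proof is correct and takes essentially the same route as the paper's: telescope the update rule for $F_{i,j,m}$ along the edges leaving $s^0_i$, use nonnegativity (Proposition \ref{prop_prep_cond1}) together with the loop guard to conclude that the residual outflow from $s^0_i$ is exactly zero at termination, and then sum over $i$ and $j$. The only difference is one of bookkeeping --- you telescope the aggregate outflow $\sum_{v^1}F_{i,j,m}(s^0_i,v^1)$ directly, whereas the paper telescopes each edge $(s^0_i,v^1)$ separately and then sums over $v^1$ --- and your explicit observation that each discovered path uses exactly one edge out of $s^0_i$ (since $s^0_i\in V^0$ has no incoming edges in the layered graph $G'$) makes that aggregation step airtight.
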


Due to the flow conservation property of the function $F_{i,j,m}$, the algorithm \ref{app_flow_dec3} runs until, $F_{i,j,m+1}(s^0_i,v^1) =0$, for all the neighbours of the source node $s^0_i$. In the previous propositions, we establish a relation between $g_{ij}$ and $F_{i,j,m}$ and the set of discovered paths, i.e., $F_{i,j,m}(s^0_i,v^{1}) = g_{ij}(s^0_i,v^{1}) - \sum_{\substack{p_{j,m} \in P_{i,j,m} \\ (s^0_i,v^{1}) \in p_{j,m}}} \frac{\ra_{p_{j,m}}}{(\pswap)^{j-1}}$. If all the paths are discovered, then the value of $F_{i,j,m+1}(s^0_i,v^1)$ becomes zero and $g_{ij}$ will only be a function of all the discovered paths. Here, we use this relation to prove the equivalence of the objective functions.

\begin{proof}
From proposition \ref{prop_prep_cond_int} we have that the functions $F_{i,j,m}$ follow the flow conservation. From the proposition \ref{prop_prep_cond1} we have that $F_{i,j,m}(u',v') \geq 0$ for all $0 \leq m \leq |E'|$. This implies, for a fixed $i,j$, for any value of $m$, we can always find one path $p_{j,m}$ with non-zero $\ra_{p_{j,m}}$ until, $\sum_{v^1:(s^0_i,v^1) \in E'} F_{i,j,m}(s^0_i,v^1) =0$. According to the algorithm \ref{app_flow_dec3} after each iteration if an edge $(s^0_i,v^1) \in p_{j,m}$,

\begin{equation}
\label{obj_prep_1}
F_{i,j,m+1}(s^0_i,v^1)  = F_{i,j,m}(s^0_i,v^1) - \frac{\ra_{p_{j,m}}}{(\pswap)^{j-1}}.
\end{equation}

If for some $i,j,m$, $F_{i,j,m+1}(s^0_i,v^1) =0$ then from the equation \ref{obj_prep_1} we have,

\begin{align*}
F_{i,j,m}(s^0_i,v^1) &=\frac{\ra_{p_{j,m}}}{(\pswap)^{j-1}}.
\end{align*}

Using the recurrence relation of equation \ref{obj_prep_1} we can rewrite the above expression as,

\begin{align*}
F_{i,j,m-1}(s^0_i,v^1) -\frac{\ra_{p_{j,m-1}}}{(\pswap)^{j-1}} &= \frac{\ra_{p_{j,m}}}{(\pswap)^{j-1}}.
\end{align*}
If we continue like this until $m=0$, then we get
\begin{align*}
F_{i,j,0}(s^0_i,v^1) &= \sum_{m'=0}^m \frac{\ra_{p_{j,m'}}}{(\pswap)^{j-1}}.
\end{align*}
Note that, $F_{i,j,0} = g_{ij}$. This implies, 
\begin{align}
\nonumber
g_{ij}(s^0_i,v^1) &= \sum_{m'=0}^m  \frac{\ra_{p_{j,m'}}}{(\pswap)^{j-1}}\\ \label{obj_2}
\sum_{m'=0}^m  \ra_{p_{j,m'}} & = (\pswap)^{j-1} g_{ij}(s^0_i,v^1),
\end{align}
This is true for all $m$. This implies,

\begin{equation}
(\pswap)^{j-1}g_{ij}(s^0_i,v^1) = \sum_{\substack{p \in P_{i,j} \\ (s^0_i,v^1) \in p}}  \ra_{p}.
\end{equation}
By taking sum on both sides on $i$ and $j$ we get,

\begin{equation}
\sum_{i=1}^k\sum_{j=1}^{l_i}(\pswap)^{j-1}\sum_{\substack{v:(s^0_i,v^1) \in E'}}  g_{ij}(s^0_i,v^1)= \sum_{i=1}^k\sum_{j=1}^{l_i}\sum_{p \in \p_{i,j}}\ra_{p}.
\end{equation}
This concludes the proof.

\end{proof}

\section{Prepare and swap protocol}
\label{app_prep_swap}

In this appendix we prove the EPR-pair generation rate across a repeater chain using prepare and swap protocol.

\begin{lemma}
\label{app_egr_prep_swap}

In a repeater chain network with $n+1$ repeaters $\{u_0, u_1, \ldots, u_{n}\}$, if the probability of generating an elementary pair per attempt is one and the probability of a successful BSM is $(\pswap)$  and the capacity of an elementary link $(u_i,u_{i+1})$ (for $0 \leq i \leq n-1$) is denoted by $C_i$ and if the repeaters follow the prepare and swap protocol for generating EPR-pairs, then the expected end-to-end entanglement generation rate $r_{u_0,u_{n}}$ is,
\begin{align}
\label{supp_eg_rate_prep_mes}
r_{u_0,u_{n}} & = (\pswap)^{n-1}\min\{C_0, \ldots ,C_{n-1}\}.
\end{align}
\end{lemma}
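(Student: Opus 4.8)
The plan is to analyze the prepare-and-swap protocol as a repeated process in which, during each ``round,'' every elementary link simultaneously attempts to produce its Bell pairs, and then every intermediate repeater attempts its BSM on the pairs it holds. Since elementary pair generation succeeds deterministically, in a time window in which link $(u_i,u_{i+1})$ can produce $C_i$ pairs, the bottleneck link limits the number of ``parallel chains'' of elementary pairs that can be lined up end-to-end to $\min\{C_0,\dots,C_{n-1}\}$ per unit time. For each such chain of $n$ adjacent elementary pairs, establishing an end-to-end EPR pair between $u_0$ and $u_n$ requires all $n-1$ intermediate BSMs to succeed; since the swaps are independent and each succeeds with probability $\pswap$, a given chain yields an end-to-end pair with probability $(\pswap)^{n-1}$. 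First I would set up this counting carefully: in one unit of time the system assembles $\min\{C_0,\dots,C_{n-1}\}$ independent candidate chains, and by linearity of expectation the expected number of successful end-to-end pairs produced per unit time is $(\pswap)^{n-1}\min\{C_0,\dots,C_{n-1}\}$, which is exactly \eqref{supp_eg_rate_prep_mes}.

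To make this rigorous I would introduce indicator random variables: for the $\ell$-th candidate chain ($1 \le \ell \le \min_i C_i$ in a unit interval) let $X_\ell$ be $1$ if all $n-1$ swaps along that chain succeed and $0$ otherwise, so $\mathbb{E}[X_\ell] = (\pswap)^{n-1}$ because the $n-1$ swap events are mutually independent. Then the number of end-to-end pairs in the interval is $\sum_\ell X_\ell$, and $\mathbb{E}\big[\sum_\ell X_\ell\big] = (\pswap)^{n-1}\min\{C_0,\dots,C_{n-1}\}$. I would then argue that the per-round picture extends to a genuine rate (pairs per second) because the protocol is memoryless across rounds: each round is an independent repetition with the same expected yield, so the long-run average rate equals the per-round expectation. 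A variant of this argument already appears in \cite{SSMS14}, so I would mostly be filling in the bookkeeping.

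The main obstacle I anticipate is handling the asymmetry of the capacities $C_i$ cleanly: because the links generate pairs at different rates, one must be careful about which pairs get ``matched'' into chains and whether leftover pairs at the faster links can be reused later. I would address this by noting that, under the prepare-and-swap scheduling, a repeater performs its swap as soon as both of its relevant pairs are ready, so the effective throughput of the chain is throttled by the slowest link — any surplus at faster links is simply idle capacity and does not increase the end-to-end count. A secondary subtlety is that a swap can partially fail, breaking a chain in the middle; I would point out that this only ever decreases the count relative to the all-swaps-succeed event, and that the expected value computation already accounts for this via the product $(\pswap)^{n-1}$. Aside from these modeling points, the remaining steps are routine applications of independence and linearity of expectation.
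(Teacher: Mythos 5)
Your proposal is correct and follows essentially the same argument as the paper: the bottleneck link caps the number of end-to-end chains of elementary pairs at $\min\{C_0,\dots,C_{n-1}\}$ per unit time, each chain survives all $n-1$ independent swaps with probability $(\pswap)^{n-1}$, and linearity of expectation gives the stated rate. Your version is in fact slightly cleaner than the paper's, which at one point miscounts the exponent as $n$ before landing on the correct $n-1$.
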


\begin{proof}
In the entanglement generation protocol, first, the repeaters start generating the elementary pairs in parallel. As, the elementary pair generation is a deterministic event, so each of the node $u_i$ can generate $C_i$ EPR-pairs with its neighbour $u_{i+1}$ ($0 \leq i \leq n-1$) per second. After generating the elementary pairs, the intermediate nodes perform the swap operations independently of each other. This implies, if all the swap operations are deterministic then the end-to-end entanglement generation is $\min\{C_1, \ldots, C_n\}$.
However, each of the swap operations succeed with probability $(\pswap)$. The end-to-end entanglement generation probability is equal to the probability that all elementary links are successfully swapped which is $(\pswap)^n$ and there are $\min\{C_1, \ldots, C_n\}$ such elementary links. This implies, the expected end-to-end entanglement generation rate is $r_{u_0,u_{n}} = (\pswap)^{n-1}\min\{C_0, \ldots ,C_{n-1}\}$. This concludes the proof.
\end{proof}

\end{document}